\renewcommand{\section}{\@startsection%
{section}%
{1}%
{0em}%
{1.7em}%
{1.2em}%
{\normalfont\large\centering\bfseries}}
\renewcommand{\@seccntformat}[1]%
{\csname the#1\endcsname.\hspace{0.5em}}
\numberwithin{equation}{section}
\newtheorem{theorem}{Theorem}[section]
\newtheorem*{thm}{Theorem}
\newtheorem{proposition}[theorem]{Proposition}
\newtheorem{lemma}[theorem]{Lemma}
\theoremstyle{definition}
\newtheorem{definition}[theorem]{Definition}
\newtheorem{hypothesis}[theorem]{Hypothesis}
\theoremstyle{remark}
\newtheorem{remark}{Remark}
\newcommand{\ie}{{\it i.\,e.}}
\newcommand{\viz}{{\it viz.}}
\newcommand{\cf}{{\it cf.}}
\newcommand{\abs}[1]{\left\lvert #1 \right\rvert}
\newcommand{\abss}[1]{\lvert #1 \rvert}
\newcommand{\norm}[1]{\left\lVert #1 \right\rVert}
\newcommand{\inner}[2]{\left\langle#1,#2\right\rangle}
\newcommand{\inners}[2]{\big\langle#1,#2\big\rangle}
\newcommand{\cB}{\mathcal{B}}
\newcommand{\cO}{\mathcal{O}}
\newcommand{\cR}{\mathcal{R}}
\newcommand{\cPW}{\mathcal{PW}}
\newcommand{\R}{{\mathbb R}}
\newcommand{\C}{{\mathbb C}}
\newcommand{\N}{{\mathbb N}}
\newcommand{\Z}{{\mathbb Z}}
\newcommand{\cc}[1]{\overline{#1}}
\newcommand{\defeq}{\mathrel{\mathop:}=} 
\def\cprime{$'$}
\DeclareMathOperator{\im}{Im}
\DeclareMathOperator{\dom}{dom}
\DeclareMathOperator{\spec}{spec}
\begin{document}

\begin{titlepage}
\title{Oversampling and undersampling in de Branges spaces arising from
  regular Schr\"odinger operators
\footnotetext{%
Mathematics Subject Classification(2010):
47B32,  
41A05,  
34L40,  
}
\footnotetext{%
Keywords:
de Branges spaces;
Schr\"odinger operators;
Sampling Theory.
}
\hspace{-8mm}
}
\author{
\textbf{Luis O. Silva}\thanks{%
Supported by UNAM-DGAPA-PAPIIT IN110818 and SEP-CONACYT CB-2015 254062
}
\\
\small Departamento de F\'{i}sica Matem\'{a}tica\\[-1.6mm]
\small Instituto de Investigaciones en Matem\'aticas Aplicadas y en Sistemas\\[-1.6mm]
\small Universidad Nacional Aut\'onoma de M\'exico\\[-1.6mm]
\small C.P. 04510, Ciudad de M\'exico\\[-1.6mm]
\small \texttt{silva@iimas.unam.mx}
\\[2mm]
\textbf{Julio H. Toloza}\thanks{Partially supported by CONICET
	(Argentina) through grant PIP 11220150100327CO}
\\
\small INMABB\\[-1.6mm]
\small Departamento de Matem\'atica\\[-1.6mm]
\small Universidad Nacional del Sur (UNS) - CONICET\\[-1.6mm]
\small Bah\'ia Blanca, Argentina\\[-1.6mm]
\small \texttt{julio.toloza@uns.edu.ar}
\\[2mm]
\textbf{Alfredo Uribe}
\\
\small Departamento de F\'{i}sica Matem\'{a}tica\\[-1.6mm]
\small Instituto de Investigaciones en Matem\'aticas Aplicadas y en Sistemas\\[-1.6mm]
\small Universidad Nacional Aut\'onoma de M\'exico\\[-1.6mm]
\small C.P. 04510, Ciudad de M\'exico\\[-1.6mm]
\small \texttt{alfredo.uribe.83@ciencias.unam.mx}
}
\date{}
\maketitle
\vspace{-4mm}
\begin{center}
\begin{minipage}{5in}
  \centerline{{\bf Abstract}} \bigskip The classical results on
  oversampling and undersampling (or aliasing) of functions in
  Paley-Wiener spaces are generalized to the case of functions in de Branges spaces arising
  from regular Schr\"odinger operators with a wide range of potentials.
\end{minipage}
\end{center}
\thispagestyle{empty}
\end{titlepage}

\section{Introduction}
\label{sec:intro}

This paper deals with the subject of oversampling and undersampling
---the latter also known as aliasing in the engineering and signal processing
literature--- in the context of de Branges Hilbert spaces of entire functions
(dB spaces for short).
These notions play a prominent role in the theory of Paley-Wiener spaces
\cite{MR1473224,MR1270907}. Since Paley-Wiener spaces are leading
examples of dB spaces, questions related to oversampling and
undersampling in dB spaces emerge naturally.

Paley-Wiener spaces stem from the Fourier
transform of functions with given compact support centred at zero, \viz,
\begin{equation*}
  \cPW_a:=\left\{f(z)=\int_{-a}^a e^{-ixz}\phi(x)dx : \phi\in L_2(-a,a)\right\}.
\end{equation*}
By the Whittaker-Shannon-Kotel'nikov theorem, any function
$f(z)\in\cPW_a$ is decomposed as follows.
\begin{equation}
\label{eq:wsk-sampling}
  f(z)=\sum_{n\in\Z}f\left(\frac{n\pi}{a}\right)\mathcal{G}_a\left(z,\frac{n\pi}{a}\right),
\qquad\mathcal{G}_a\left(z,t\right):=\frac{\sin\left[a(z-t)\right]}{a(z-t)},
\end{equation}
where the convergence of the series is uniform in any compact subset of
$\C$. The function $\mathcal{G}_a\left(z,t\right)$ is referred to as
the sampling kernel.

In oversampling, the starting point is a function
$f(z)\in\cPW_a\subset\cPW_b$ ($a<b$).  Then, in addition to
\eqref{eq:wsk-sampling}, one has
\begin{equation*}
  f(z)=\sum_{n\in\Z}f\left(\frac{n\pi}{b}\right)
\mathcal{G}_b\left(z,\frac{n\pi}{b}\right)
\end{equation*}
Moreover, $f(z)$ admits a different representation
\begin{equation}
  \label{eq:wsk-oversampling}
   f(z)=\sum_{n\in\Z}f\left(\frac{n\pi}{b}\right)
  	\widetilde{\mathcal{G}}_{ab}\left(z,\frac{n\pi}{b}\right),
\end{equation}
with a modified sampling kernel $\widetilde{\mathcal{G}}_{ab}(z,t)$ depending on
$a$ and $b$ (see \cite[Thm. 7.2.5]{MR1473224}).  While the convergence
of the sampling formula \eqref{eq:wsk-sampling} is unaffected by $l_2$
perturbations of the samples $f\left(\frac{n\pi}{a}\right)$, formula
\eqref{eq:wsk-oversampling} is more robust because it is convergent
even under $l_\infty$ perturbations of the samples.  That is, if the
sequence $\{\epsilon_n\}_{n\in\Z}$ is bounded and one defines
\begin{equation}
 \label{eq:wsk-oversampling-with-error}
\widetilde{f}(z):=\sum_{n\in\Z}\left[f\left(\frac{n\pi}{b}\right)+\epsilon_n\right]
  	\widetilde{\mathcal{G}}_{ab}\left(z,\frac{n\pi}{b}\right),
\end{equation}
then $\abss{f(z)-\widetilde{f}(z)}$ is uniformly bounded in compact subsets
of $\C$ \cite[Thm. 7.2.5]{MR1473224}.

Undersampling, on the other hand, looks for the approximation of a
function $f(z)$ in
$\cPW_b\setminus\cPW_a$ by another one formally constructed using the sampling
formula \eqref{eq:wsk-sampling}, namely,
\begin{equation}
\label{eq:wsk-undersampling}
  \widehat{f}(z)
  	=\sum_{n\in\Z}f\left(\frac{n\pi}{a}\right)
   \mathcal{G}_a\left(z,\frac{n\pi}{a}\right).
\end{equation}
The series in \eqref{eq:wsk-undersampling} is indeed convergent and,
moreover,
$\abss{f(z)-\widehat{f}(z)}$ is uniformly bounded in compact subsets
of $\C$. Formula \eqref{eq:wsk-undersampling} yields in fact an
approximation not only for functions in $\cPW_b\setminus\cPW_a$, but
for the Fourier transform of elements in $L_1(\R)\cap L_2(\R)$
\cite[Thm.~7.2.9]{MR1473224}.

Oversampling and undersampling are, to some extent, consequences of
the fact that the chain of Paley-Wiener spaces $\cPW_s$,
$s\in(0,\infty)$, is totally ordered by inclusion.  As this is a
property shared by all dB spaces in the precise sense of
\cite[Thm.\,35]{MR0229011}, it is expected that analogous notions
should make sense in this latter class of spaces. We note that
sampling formulas generalizing \eqref{eq:wsk-sampling} are known for
arbitrary reproducing kernel Hilbert spaces (see e.g. Kramer-type
formulas in \cite{MR3275436,jorgensen-2016,MR2345302,MR3011977}), dB
spaces among them.  Analysis of error due to noisy samples and
aliasing, among other sources, in Paley-Wiener spaces goes back at
least to \cite{papoulis-1966}.  More recent literature on the subject
is, for instance,
\cite{bailey-2015,bailey-2016,bodmann-2012,kozachenko-2016}. However,
to the best of our knowledge, estimates for oversampling and
undersampling are not known for dB spaces apart from the Paley-Wiener
class.

A function $f(z)$ belonging to a dB space $\cB$ obviously admits a
representation in terms of an orthogonal basis. In particular,
\begin{equation}
 \label{eq:simple-sampling-db}
  f(z)=\sum_{t\in\spec(S(\gamma))}f(t)\frac{k(z,t)}{k(t,t)} ,
\end{equation}
where $k(z,w)$ is the reproducing kernel of $\cB$ and $S(\gamma)$ is a
canonical selfadjoint extension of the operator of multiplication by
the independent variable in $\cB$. The expansion
(\ref{eq:simple-sampling-db}) is a sampling formula with
$k(z,t)/k(t,t)$ being its sampling kernel. Note that
(\ref{eq:wsk-sampling}) is a particular realization of
(\ref{eq:simple-sampling-db}) for the dB space $\cPW_a$.

In order to obtain oversampling and undersampling estimates in analogy
to the Paley-Wiener case, we look into dB spaces of the form
\begin{equation}
\label{eq:our-db-spaces}
\cB_s = \left\{f(z) = \int_0^s \xi(x,z)\phi(x)\, dx : \phi\in L_2(0,s)\right\},
\end{equation}
where $\xi(x,z)$ solves
\begin{equation*}
  -\frac{d^2}{dx^2}\varphi+V(x)\varphi=z\varphi,\quad x\in (0,s),\quad z\in\C,
\end{equation*}
for some $s\in(0,\infty)$ and with Neumann boundary condition at $x=0$ (see
Section~\ref{sec:prel}). Here $V\in L_1(0,s)$ is a real function.
By construction $\cB_s\subset\cB_{s'}$ whenever $s<s'$ (for more on
this, see \cite{MR1943095}).

Define
\begin{equation*}
  \mathcal{K}_s(z,t):=\frac{k_s(z,t)}{k_s(t,t)}\,,
\end{equation*}
where $k_s(z,w)$ is the reproducing kernel of the space $\cB_s$. If
$S_s(\gamma)$ is a selfadjoint extension of the multiplication
operator in $\cB_s$, then any $f(z)\in\cB_s$ has the representation
\begin{equation*}
  f(z)=\sum_{t\in\spec(S_s(\gamma))}f(t)\mathcal{K}_s\left(z,t\right)\,.
\end{equation*}

Our main results are Theorems~\ref{thm:main-oversampling} and
\ref{thm:main-subsampling}, which can be
summarized as follows:

\begin{thm}[oversampling]
Assume that $V$ is real-valued and in $\text{\rm AC}[0,\pi]$ (the set
of absolutely continuous functions in $[0,\pi]$). Consider an
arbitrary $f(z)\in\cB_a$, where $a\in(0,\pi)$.
For a given $\{\epsilon_t\}\in l_\infty$, define
\begin{equation*}
  \widetilde{f}(z):=
  \sum_{t\in\spec(S_{\pi}(\pi/2))}\widetilde{\mathcal{K}}_{a\pi}(z,t)
  \left(f(t)+\epsilon_t\right),
\end{equation*}
where $\widetilde{\mathcal{K}}_{ab}(z,t)$ is given in \eqref{eq:def-K}.
Then, for every compact set $K$ of $\C$, there is a
constant $C(a,K,V)>0$ such that
\[
\abs{f(z)-\widetilde{f}(z)} \le
C(a,K,V)\norm{\epsilon}_\infty,\quad z\in K.
\]
\end{thm}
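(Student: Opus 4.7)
The proof follows the same template as the Paley--Wiener case (\cite[Thm.~7.2.5]{MR1473224}). The plan is to first establish that, when $\epsilon\equiv 0$, the series reconstructs $f$ exactly,
\[
f(z) = \sum_{t\in\spec(S_\pi(\pi/2))} \widetilde{\mathcal{K}}_{a\pi}(z,t)\,f(t),
\]
which is the oversampling identity expressing that $f\in\cB_a$, viewed as an element of the larger space $\cB_\pi$, can be represented at the sampling nodes of $\cB_\pi$ with a modified, faster-decaying kernel $\widetilde{\mathcal{K}}_{a\pi}$ manufactured from the freedom provided by the strict inclusion $\cB_a\subsetneq\cB_\pi$. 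Subtracting this identity from the definition of $\widetilde{f}$ reduces the error to a purely noise-driven series, and pulling out $\|\epsilon\|_\infty$ gives
\[
|f(z)-\widetilde{f}(z)| \;\le\; \|\epsilon\|_\infty \sum_{t\in\spec(S_\pi(\pi/2))}\bigl|\widetilde{\mathcal{K}}_{a\pi}(z,t)\bigr|.
\]

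The real content of the theorem is then to prove that this last sum is bounded uniformly for $z$ in any compact set $K\subset\C$, with a constant depending only on $a$, $K$, and $V$. This is where both the gap $a<\pi$ and the hypothesis $V\in\text{AC}[0,\pi]$ enter. The gap is what allows $\widetilde{\mathcal{K}}_{a\pi}$ to be tuned, via an integration-by-parts or smoothing device in the integral representation \eqref{eq:our-db-spaces}, so as to carry an extra power of $t^{-1}$ relative to the bare sampling kernel $\mathcal{K}_\pi(z,t)$; the absolute continuity of $V$ supplies the additional order in the large-$z$ asymptotics of $\xi(x,z)$ that is needed to turn that formal improvement into an honest termwise bound.

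Concretely, I would use the Liouville--Green/WKB asymptotics of $\xi(x,z)$ as $|z|\to\infty$ (valid on horizontal strips containing $K$ when $V\in\text{AC}$), together with the Weyl-type eigenvalue asymptotics for $S_\pi(\pi/2)$, which give $\sqrt{t_n}=n+O(1)$ and hence $|t_{n+1}-t_n|\gtrsim n$. Substituting these expansions into the explicit formula for $\widetilde{\mathcal{K}}_{a\pi}(z,t)$ in \eqref{eq:def-K}, including the $z$-derivative of $\xi$ that appears in the normalization $k_\pi(t,t)$ through a Christoffel--Darboux-type identity, the oscillatory factors on the interval $(a,\pi)$ should produce a cancellation leaving a bound of the form $|\widetilde{\mathcal{K}}_{a\pi}(z,t_n)|\le C(a,K,V)/n^2$ for $n$ large and $z\in K$. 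Summing over $n$ yields the stated estimate.

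The main obstacle is the asymptotic bookkeeping: one must carry enough terms of the WKB expansion of $\xi(x,z)$ and $\partial_z\xi(x,z)$ to see the cancellation that produces the quadratic decay, and one must maintain the bound uniformly on a horizontal strip containing $K$ rather than only on the real axis, where the asymptotics of Schr\"odinger solutions are most transparent. In essence, this is a stationary-phase/oscillatory-cancellation argument dressed in spectral-theoretic clothing, and the absolute continuity of $V$ is precisely the regularity needed to push the expansion to the required order; weakening it to $V\in L_1$ would cost one power and destroy the $l_\infty$ estimate.
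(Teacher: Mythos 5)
Your skeleton is the paper's: first the exact identity $f(z)=\sum_t\widetilde{\mathcal{K}}_{a\pi}(z,t)f(t)$ (equation \eqref{expresion de g}, obtained by inserting the weight $\mathcal{R}_{a\pi}$ --- which equals $1$ on $[0,a]$, where the $L_2$-preimage $\varphi$ of $f$ is supported --- into the eigenfunction expansion of $\varphi$ in $L_2(0,\pi)$), then the reduction of the error to $\norm{\epsilon}_\infty\sum_t\abss{\widetilde{\mathcal{K}}_{a\pi}(z,t)}$, and finally the uniform convergence of that sum on compacts (Hypothesis~\ref{hyp:convergence-oversampling}). Where you differ is in how the termwise bound is produced. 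The paper uses neither WKB asymptotics of $\xi(x,z)$ in $z$ nor a Christoffel--Darboux-type identity for $k_\pi(t,t)$: it computes the $V\equiv0$ case in closed form (Lemma~\ref{valor importante del producto interior}; the $(\sqrt z\pm n)^{-2}$ decay comes from integrating the continuous, piecewise-linear ramp $\mathcal{R}_{a\pi}$ by parts twice against $\cos(nx)$), and then treats general $V$ perturbatively via the transformation-operator asymptotics $\xi(x,\lambda_n)=\cos(nx)+n^{-1}\rho(x)\sin(nx)+\cO(n^{-2})$ and $k_\pi(\lambda_n,\lambda_n)=\pi/2+\cO(n^{-2})$ of Lemma~\ref{lemma:nuevo}, showing in Lemma~\ref{sumandos oversampling} that the deviation from the free case is itself $\cO(n^{-2})$. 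Your route is workable, but two points need attention. First, the eigenvalue asymptotics $\sqrt{t_n}=n+\cO(1)$ you invoke is too weak: replacing $\cos(\sqrt{\lambda_n}\,x)$ by $\cos(nx)$ then costs an $\cO(1)$ error and the claimed cancellation is lost; one needs $\sqrt{\lambda_n}=n+\cO(n^{-1})$, which is exactly what $V\in\text{AC}[0,\pi]$ delivers and what the paper uses. Second, your framing that the gap $a<\pi$ buys an extra power of $t^{-1}$ over the bare sampling kernel is a Paley--Wiener intuition that does not transfer here: the nodes satisfy $\lambda_n\sim n^2$, so the bare kernel $k_\pi(z,\lambda_n)/k_\pi(\lambda_n,\lambda_n)$ is already $\cO(n^{-2})$ (cf.\ \eqref{nucleo sin potencial lambda real}) and the modified kernel has the same order; relatedly, your closing claim that $V\in L_1$ would destroy the estimate runs against the authors' stated expectation that the result should persist for $V\in L_1$, the $\text{AC}$ hypothesis being a limitation of the method rather than of the theorem.
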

We remark that the bound is uniform for $f(z)\in\cB_a$. Note that
$\widetilde{\mathcal{K}}_{ab}(z,t)$ is a modified sampling
kernel analogous to the one in (\ref{eq:wsk-oversampling-with-error}).
\begin{thm}[undersampling]
Assume $V$ is real-valued and in $\text{\rm AC}[0,b]$ with $b>\pi$.
Given $g(z)\in\cB_b\setminus\cB_\pi$, define
\begin{equation*}
  \widehat{g}(z):=\sum_{t\in\spec(S_\pi(\pi/2))}g(t)
\mathcal{K}_\pi\left(z,t\right)\,.
\end{equation*}
Then, for each compact set $K\subset\C$, there is a constant $D(b,K,V)>0$ such that
\[
\abs{g(z)-\widehat{g}(z)} \le D(b,K,V)\int_{\pi}^b\abs{\psi(x)}dx
\]
uniformly on $K$, where $\psi\in L_2(0,b)$ obeys
$g(z)=\inner{\xi(\cdot,\cc{z})}{\psi(\cdot)}_{L_2(0,b)}$.
\end{thm}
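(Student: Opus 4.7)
The plan is to split $g$ into an in-space part and a tail, and then reduce the estimate to a single uniform bound on an aliasing kernel.

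Write $\psi=\psi_1+\psi_2$ with $\psi_1$ the restriction of $\psi$ to $(0,\pi)$ and $\psi_2$ the restriction to $(\pi,b)$, and decompose $g=g_1+g_2$ accordingly through $g(z)=\inner{\xi(\cdot,\cc z)}{\psi}_{L_2(0,b)}$. By \eqref{eq:our-db-spaces} we have $g_1\in\cB_\pi$, and then \eqref{eq:simple-sampling-db} yields $g_1(z)=\sum_t g_1(t)\,\mathcal{K}_\pi(z,t)$ uniformly on compact subsets of $\C$. Subtracting from the definition of $\widehat g$ reduces the theorem to bounding $g_2(z)-\sum_t g_2(t)\,\mathcal{K}_\pi(z,t)$. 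Since $g_2(t)=\int_\pi^b \xi(x,t)\,\cc{\psi(x)}\,dx$ for each real spectral point $t$, a finite interchange of sum and integral gives, for every $N\in\N$,
\begin{equation*}
g_2(z)-\sum_{|t|\le N} g_2(t)\,\mathcal{K}_\pi(z,t)=\int_\pi^b\bigl[\xi(x,z)-E_N(z,x)\bigr]\cc{\psi(x)}\,dx,
\end{equation*}
with $E_N(z,x):=\sum_{|t|\le N}\xi(x,t)\,\mathcal{K}_\pi(z,t)$. Consequently, the theorem is subsumed into the single uniform estimate
\begin{equation*}
\sup_{N\in\N,\,z\in K,\,x\in[\pi,b]}\bigl|\xi(x,z)-E_N(z,x)\bigr|\le D(b,K,V),
\end{equation*}
which simultaneously forces $\widehat g=\lim_N\widehat g_N$ to exist uniformly on compacts of $\C$ and, upon passing to the limit $N\to\infty$ in the displayed identity above and taking absolute values, produces the claimed inequality.

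The main obstacle is establishing this aliasing bound, and it is here that the hypothesis $V\in\mathrm{AC}[0,b]$ is crucial. The subtlety is that for $x>\pi$ the entire function $\xi(x,\cdot)$ does \emph{not} belong to $\cB_\pi$, so $E_N(z,x)$ cannot converge to $\xi(x,z)$: a genuine aliasing error must be controlled. I would attack it through Liouville--Green asymptotics of the form $\xi(x,t)=\cos(\sqrt{t}\,x)+O(1/\sqrt{t})$ with error uniform in $x\in[0,b]$, together with the matching large-$t$ asymptotics for the eigenvalues of $S_\pi(\pi/2)$ and for $k_\pi(t,t)$. These reduce the bound to the classical aliasing estimate for the cosine kernel in the Paley--Wiener setting, where uniformity on $K\times[\pi,b]$ is well known, and the $O(1/\sqrt{t})$ corrections are summed against $\mathcal{K}_\pi(z,t)$ via a Cauchy--Schwarz argument that does not require $\xi(x,\cdot)\in\cB_\pi$. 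The constant $D(b,K,V)$ ultimately absorbs the sup-norm of $V$ on $[0,b]$, the endpoint $b$, and standard growth bounds on $\xi$ over $K\times[0,b]$.
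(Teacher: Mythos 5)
Your overall architecture is the same as the paper's. The split $g=g_1+g_2$ with $g_1\in\cB_\pi$ exactly reproduced by its sampling series, followed by the reduction of the theorem to a uniform bound on the aliasing kernel $\xi(x,z)-\sum_t\mathcal{K}_\pi(z,t)\xi(x,t)$ over $(z,x)\in K\times[\pi,b]$, is precisely what the paper does through $\xi^{ext}_\pi$ (Hypothesis~\ref{hyp:absolute-convergence-subsampling} and Lemma~\ref{propiedades de xi extendida}: since $\xi^{ext}_\pi=\xi$ a.e.\ on $[0,\pi]$, only the integral over $[\pi,b]$ survives and is bounded by $h_\pi(z)\int_\pi^b\abs{\psi}$). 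One logical slip: uniform boundedness of the partial sums $E_N$ does not by itself force $\widehat g=\lim_N\widehat g_N$ to exist; you need absolute (or at least pointwise) convergence of the series, which is what Hypothesis~\ref{hyp:absolute-convergence-subsampling} actually demands. This is repairable, since any perturbative proof of your $\sup_N$ bound would in fact deliver absolute convergence.

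The genuine gap is in the key estimate itself. Writing $\mathcal{K}_\pi(z,\lambda_n)=k_\pi(\lambda_n,\cc{z})/k_\pi(\lambda_n,\lambda_n)$, the reduction to the free cosine kernel requires $\ell_1$ control of the numerator differences $k_\pi(\lambda_n,\cc{z})-\accentset{\circ}{k}_\pi(n^2,\cc{z})$, because these multiply $\cos(nx)$ with $x\in[\pi,b]$, a sequence that does not decay. The inputs you list --- $\xi(x,t)=\cos(\sqrt{t}\,x)+\cO(t^{-1/2})$, $\sqrt{\lambda_n}=n+\cO(n^{-1})$, and the asymptotics of the \emph{diagonal} $k_\pi(t,t)$ --- yield only $k_\pi(\lambda_n,\cc{z})-\accentset{\circ}{k}_\pi(n^2,\cc{z})=\cO(n^{-1})$, which is $\ell_2$ but not $\ell_1$, and Cauchy--Schwarz cannot close this because the partner sequence $\cos(nx)$ is merely $\ell_\infty$. (Your Cauchy--Schwarz step does work where the $\cO(n^{-1})$ eigenfunction correction is paired with the $\ell_2$ sequence $\mathcal{K}_\pi(z,\lambda_n)$; it fails for the kernel-replacement piece.) Obtaining the needed $\cO(n^{-2})$ requires the \emph{two-term} eigenfunction asymptotics $\xi(x,\lambda_n)=\cos(nx)+\tfrac{\rho(x)}{n}\sin(nx)+\cO(n^{-2})$ of Definition~\ref{funciones auxiliares} and Lemma~\ref{lemma:nuevo}\ref{eq:asymp-T}, followed by integration by parts of the explicit oscillatory correction (Lemmas~\ref{lem:integrals-1-3} and~\ref{estimacion nucleo k V}). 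That is exactly where the hypothesis $V\in\mathrm{AC}[0,b]$ enters; the first-order bound \eqref{estimado funciones propias con potencial} that you invoke never sees $V'$ and so cannot be the place where absolute continuity becomes crucial, contrary to what your plan suggests.
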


These results are somewhat limited in several respects. First, we show
oversampling relative to the pair $\cB_a\subset\cB_\pi$, and
undersampling relative to the pair $\cB_\pi\subset\cB_b$ (for dB
spaces defined according to \eqref{eq:our-db-spaces}). These
particular choices are related to a convenient simplification in the
proofs, but our results can be extended to an arbitrary pair
$\cB_a\subset\cB_b$ by a scaling argument. Second, the sampling
formulae use the spectra of selfadjoint operators with Neumann
boundary condition at the left endpoint. This choice simplifies the
asymptotic formulae for eigenvalues of the associated Schr\"odinger
operator; it can also be removed but at the expense of a somewhat
clumsier analysis. In our opinion this extra workload would not add
anything substantial to the results. Finally, and more importantly
from our point of view, our assumption on the potential functions is a
bit too restrictive. In view of \cite{MR1943095}, we believe that our
results should be valid just requiring $V\in L_1(0,s)$, but relaxing
our present assumption on $V$ would require some major changes in the
details of our proofs. Further generalizations of the results
presented here (in particular, involving a wider class of dB spaces)
are the subject of a future work.

About the organization of this work: Section~\ref{sec:prel} recalls
the necessary elements on de Branges spaces and regular Schr\"odinger
operators.  Section~\ref{oversampling} deals with
oversampling. Undersampling is treated in
Section~\ref{undersampling}. The Appendix contains some technical
results.

\section{dB spaces and Schr\"{o}dinger operators}
\label{sec:prel}
There are various ways of defining a de Branges space (see
\cite[Sec. 19]{MR0229011}, \cite[Sec. 2]{MR1943095},
\cite{zbMATH06526214}). We recall the following definition: a Hilbert
space of entire functions $\cB$ is a de Branges (dB space) when it has
a reproducing kernel $k(z,w)$ and is isometrically invariant under the mappings
$f(z)\mapsto f^\#(z):=\cc{f(\cc{z})}$ and
\begin{equation*}
f(z)\mapsto \left(\frac{z-\cc{w}}{z-w}\right)^{{\rm Ord}_w(f)}f(z)\,,
\qquad w\in\C\,,
\end{equation*}
where ${\rm Ord}_w(f)$ is the order of $w$ as a zero of $f$.
The class of dB spaces appearing in this work has the following additional
properties:

\begin{enumerate}[label={\bf (a\arabic*)}]
\item Given any real point $x$, there is a function $f\in\cB$  such that
	$f(x)\ne 0$. \label{it:hb-without-realzeros}

\item $\cB$ is regular, i.\,e., for any $w\in\C$ and $f\in\cB$,
	$(z-w)^{-1} \left( f(z)-f(w) \right) \in \cB$. \label{it:regulatr-db}
\end{enumerate}

A distinctive structural property of dB spaces is that the set of dB
subspaces of a given dB space is totally ordered by inclusion
\cite[Thm. 35]{MR0229011}. For regular dB spaces (in the sense of
\ref{it:regulatr-db}) this means that, if $\cB_1$ and $\cB_2$ are
subspaces of a dB space that are themselves dB spaces, then either
$\cB_1\subset\cB_2$ or $\cB_1\supset\cB_2$ \cite[Sec. 6.5]{MR0448523}.

The operator $S$ of multiplication by the independent variable in a dB
space $\cB$ is defined by
\begin{equation}
  \label{eq:definition-multiplication-operator}
  (Sf)(z)=zf(z),\quad \dom(S):=\{f\in\cB: Sf\in\cB\}.
\end{equation}
This
operator is closed, symmetric and has deficiency
indices $(1,1)$.

In view of \ref{it:hb-without-realzeros}, the spectral
core of $S$ is empty (\cf \cite[Sec. 4]{MR1664343}), i.\,e., for any
$z\in\C$, the operator $(S-z I)^{-1}$ is bounded although, as a
consequence of the indices being $(1,1)$, its domain has codimension
one.  We consider dB spaces such that $S$ is densely defined and denote by $S(\gamma)$,
$\gamma\in [0,\pi)$, the selfadjoint restrictions of $S^*$.

Since
$\displaystyle \inner{(S^*-w)k(\cdot,\cc{w})}{f(\cdot)}
	= \inner{k(\cdot,\cc{w})}{(S - \cc{w})f(\cdot)}
	= 0$
for all $f(z)\in\dom(S)$, we have
$k(z,\cc{w})\in\ker(S^*-w I)$ for any $w\in\C$. Thus
\begin{equation}
  \label{eq:kernel-basis}
\left\{k(z,t) : t\in\spec(S(\gamma))\right\}\text{ is an orthogonal basis},
\end{equation}
where $\spec(S(\gamma))$ denotes the spectrum of $S(\gamma)$.
Hence, the sampling formula
\begin{equation}
\label{1ra formula kramer}
  f(z)=\sum_{t\in\spec(S(\gamma))}f(t)\frac{k(z,t)}{k(t,t)},
  \qquad f\in\cB,
\end{equation}
holds true. The convergence of this series is in the dB space, which
in turn implies uniform convergence in compact subsets of $\C$.

The dB spaces under consideration in this work are related to symmetric operators
arising from regular Schr\"odinger differential expressions. The
construction is similar to the one developed in \cite{MR1943095},
although there are other ways of generating dB spaces from
differential equations of the Sturm-Liouville type \cite{MR0276741}.

Consider a differential expression of the form
\begin{equation*}
\tau:=-\frac{d^2}{dx^2}+V(x),
\end{equation*}
where we assume
\begin{enumerate}[label={\bf (v\arabic*)}, series=onpottential]
\item \label{1ra propiedad del potencial} $V$ is real-valued and belongs to
	$L_1(0,s)$ for arbitrary $s>0$.
\end{enumerate}
For each $s>0$, $\tau$ determines a closed symmetric
operator $H_s$ in $L_2(0,s)$,
\begin{equation*}
\begin{gathered}
 \dom(H_s) :=\{\varphi\in L_2(0,s): \tau\varphi\in L_2(0,s),
 	\varphi'(0)=\varphi(s)=\varphi'(s)=0 \}
 \\[1mm]
    H_s\varphi :=\tau\varphi .
\end{gathered}
\end{equation*}
This operator is known to have deficiency indices $(1,1)$ and empty
spectral core, that is,
\begin{equation*}
  \left\{z\in\C:\text{there is } C_z>0\text{ such that }\norm{(H_s-z
      I)\varphi}\ge C_z\norm{\varphi}
    \right\}=\C.
\end{equation*}

The selfadjoint extensions of $H_s$ are given by
\begin{equation}
\label{selfadjoint extensions}
\begin{gathered}
 \dom\left(H_s(\gamma)\right)
 	:=\left\{\begin{gathered}\varphi\in L_2(0,s): \tau\varphi\in L_2(0,s),
 	   \\
       \varphi'(0)=0,\ \varphi(s)\cos{\gamma}+\varphi'(s)\sin{\gamma}=0
         \end{gathered}\right\}
 \\[1mm]
 H_s(\gamma) \, \varphi :=\tau\varphi,
\end{gathered}
\end{equation}
with $\gamma\in[0,\pi)$.
Finally, the adjoint operator of $H_s$ is
\begin{equation*}
  \dom(H_s^*)  :=\left\{\varphi\in L_2(0,s): \tau\varphi\in L_2(0,s)\,,
    				   \varphi'(0)=0 \right\},
  \qquad
  H_s^*\varphi :=\tau\varphi.   \label{adjoint of H}
\end{equation*}

Let $\xi:\R_+\times\C\to\C$ be the solution of the eigenvalue problem
\begin{equation*}
\tau\xi(x,z)=z\xi(x,z), \quad  \xi(0,z)=1, \quad   \xi'(0,z)=0.
\end{equation*}
(The derivative is taken with respect to the first argument.) The
function $\xi(x,z)$ is real entire for any fixed $x\in\R_+$
\cite[Thm.\,1.1.1]{MR1136037}, \cite[Thm.\,9.1]{MR2499016}.  Also,
$\xi(\cdot,z) \in L_2(0,s)$ for any $z\in\C$.  Using
\cite[Sec.\,4]{zbMATH06526214} one then establishes that
$\xi(\cdot,z)$ is entire as an $L_2(0,s)$-valued map.  Note that
$\xi(\cdot,z)$ depends on the potential $V$ but does not depend on the
right endpoint $s$.

According to \cite[Props.\,2.12 and 2.14]{MR3002855}
\cite[Thm.\,16]{zbMATH06526214}, the functions
\begin{equation}
  \label{eq:map-to-dB}
  f(z)=\inner{\xi(\cdot,\cc{z})}{\varphi(\cdot)}_{L_2(0,s)},
\end{equation}
with $\varphi\in L_2(0,s)$, form a dB space $\cB_s$ with the norm
given by
\begin{equation}   \label{norma en Bs}
\norm{f}_{\cB_s}=\norm{\varphi}_{L_2(0,s)}.
\end{equation}
A straightforward computation shows that the reproducing kernel of
$\cB_s$ is
\begin{equation}
  \label{eq:reproducing-xi}
  k_s(z,w)=\inner{\xi(\cdot,\cc{z})}{\xi(\cdot,\cc{w})}_{L_2(0,s)}.
\end{equation}
\begin{remark}
  \label{rem:decomposition-xi}
In view of \eqref{eq:reproducing-xi}, $k_s(z,w)$ and $\xi(\cdot,\cc{w})$ are related by the isometry
\eqref{eq:map-to-dB}. Hence, using  \eqref{eq:kernel-basis} and
expression \eqref{norma en Bs} for the norm in $\cB_s$, one obtains
\begin{equation}
   \label{eq:decomposition-function-L2}
   \varphi(x) =
   \sum_{t\in\spec(H_s(\gamma))}
     \frac{1}{k_s(t,t)} \inner{\xi(\cdot,t)}{\varphi(\cdot)}_{L_2(0,s)}
     \xi(x,t), \qquad \varphi \in L_2(0,s) ,
\end{equation}
where the series converges in the $L_2$-norm.
\end{remark}

If $r<s$, then $\cB_{r}$ is a proper dB subspace of $\cB_s$. Indeed,
$\{\cB_{r}:r\in(0,s)\}$ is a chain of dB subspaces of $\cB_s$ in
accordance with \cite[Thm.~35]{MR0229011}. The isometry from
$L_2(0,s)$ onto $\cB_s$ induced by \eqref{eq:map-to-dB} transforms
$H_s$ into the operator of multiplication by the independent variable
in $\cB_s$ (see \eqref{eq:definition-multiplication-operator}), the
latter will subsequently be denoted by $S_s$. Also, the selfadjoint
extensions $H_s(\gamma)$ are transformed into the selfadjoint
extensions $S_s(\gamma)$ of $S_s$. When referring to unitary
invariants (such as the spectrum), we use interchangeably either
$H_s(\gamma)$ or $S_s(\gamma)$ throughout this text.

\begin{remark}
  The space $\cB_s$ constructed from $L_2(0,s)$ via
  \eqref{eq:map-to-dB} depends on the potential $V$, which is assumed
  to satisfy \ref{1ra propiedad del potencial}.  However, as shown in
  \cite[Thm.~4.1]{MR1943095}, the set of entire functions in $\cB_s$
  is the same for all $V\in L_1(0,s)$; what changes with $V$ is the
  inner product in $\cB_s$. Noteworthily, since the operator $S_s$ of multiplication by
  the independent variable is defined in its maximal domain (see
  \eqref{eq:definition-multiplication-operator}), it has always the
  same domain and range and acts in the same way; yet, by modifying
  the metric of the space, each $V\in
  L_1(0,s)$ gives rise to a different family of selfadjoint extensions
  of $S_s$. As a consequence, every function in $\cB_s$ can be sampled
  by (\ref{1ra formula kramer}) using any sequence $\{\lambda_n\}$ as
  sampling points, as long as there exists $V\in L_1(0,s)$ such that
  $\{\lambda_n\}$ is the spectrum of some selfadjoint extension of the
  corresponding operator $H_s$. This fact can be considered as a
  generalization of the notion of irregular sampling, quite well
  studied in Paley-Wiener spaces by means of classical analysis; the
  Kadec's 1/4 Theorem is a chief example of this kind of results \cite{MR0162088}.
\end{remark}

\section{Oversampling} \label{oversampling}

The oversampling of a function in
$\cB_a$ is related to the fact that it can be sampled as a function in
$\cB_b$ and the sampling kernel can be modified in such a way that the
sampling series is convergent under $l_\infty$ perturbations of the
samples (see the Introduction).

Let $0<a<b<\infty$ and $V$ be as in \ref{1ra propiedad del potencial}.
Any $\varphi \in L_2(0,a)$ can be identified with an element in
$L_2(0,b)$ since
\begin{equation}   \label{phi con caracteristicas}
\varphi = \varphi \chi_{[0,a]} + 0 \chi_{(a,b]} ,
\end{equation}
where $\chi_E$ denotes the characteristic function of a set $E$. Define
\begin{equation}
\label{def de funcion R}
\cR(x) = \cR_{ab}(x) \defeq  \chi_{[0,a]}(x) + \frac{b-x}{b-a}
\chi_{(a,b]}(x), \qquad x \in [0,b] .
\end{equation}
Taking into account \eqref{eq:decomposition-function-L2} with $s=b$,
(\ref{phi con caracteristicas}) and (\ref{def de funcion R}) imply
\begin{equation}
  \label{phi con r}
\varphi(x) =
\sum_{t \in \spec( H_b(\gamma) )}\frac{1}{k_b(t,t)}\inner{\xi(\cdot,t)}
  {\varphi(\cdot)}_{L_2(0,b)}\mathcal{R}(x)\xi(x,t) ,
\end{equation}
where the convergence is in $L_2(0,b)$.
Plugging \eqref{phi con r} into \eqref{eq:map-to-dB} with $s=b$,
we obtain
\begin{equation}
  \label{expresion de g}
f(z) = \sum_{t \in \spec( H_b(\gamma) )}\frac{1}{k_b(t,t)}
\inner{\xi(\cdot,\cc{z})}{\mathcal{R}(\cdot)  \xi(\cdot,t)}_{L_2(0,b)}
f (t) , \qquad z \in \C  ,
\end{equation}
which converges uniformly in compact subsets of $\C$.
\begin{hypothesis}
  \label{hyp:convergence-oversampling}
Given $0<a<b$, the series
\begin{equation}  \label{serie undersampling hipotesis}
\sum_{t\in\spec( H_b(\gamma) )}
\frac{1}{k_b(t,t)}
\abs{\inner{\xi(\cdot,\cc{z})}{\mathcal{R}_{ab}(\cdot) \xi(\cdot,t)}_{L_2(0,b)}}
\end{equation}
converges uniformly in compact subsets of $\C$.
\end{hypothesis}

Assume that Hypothesis~\ref{hyp:convergence-oversampling} is met. Enumerate any
given sequence $\epsilon \in l_\infty$ such that
$\epsilon=\{\epsilon_t\}_{t \in \spec( H_b(\gamma) )}$.
Define
\begin{equation}
  \label{eq:def-K}
  \widetilde{\mathcal{K}}_{ab}(z,t):=\frac{1}{k_b(t,t)}\inner{\xi(\cdot,\cc{z})}
  {\mathcal{R}_{ab}(\cdot) \, \xi(\cdot,t )}_{L_2(0,b)}.
\end{equation}
In view of
\eqref{expresion de g}, the function
\begin{equation}
\label{f con perturbaciones en los pntos de muestreo}
\widetilde{f}(z) \defeq \sum_{t \in \spec( H_b(\gamma) )}
\widetilde{\mathcal{K}}_{ab}(z,t)
   \left( f(t) + \epsilon_t \right) , \quad z \in \C,
 \end{equation}
 is well defined and the defining series converges uniformly in compact
subsets of $\C$.
Moreover,
\begin{equation*}
\abs{ \widetilde{f}(z) - f(z) }
	\le \norm{\epsilon}_{l_\infty}\!\!\!
	\sum_{t \in \spec( H_b(\gamma) )} \!\frac{1}{k_b(t,t)}
	\abs{\inner{\xi(\cdot,\cc{z})}
  	{\mathcal{R}(\cdot) \, \xi(\cdot,t )}_{L_2(0,b)}},
\end{equation*}
for all $z \in \C$. Thus, the difference $\abss{\widetilde{f}(z)-f(z)}$ is
uniformly bounded in compact subsets of $\C$. Below we prove that Hypothesis
\ref{hyp:convergence-oversampling} holds true when

\begin{enumerate}[label={\bf (v\arabic*)}, resume*=onpottential]
\item \label{propiedades requeridas para el potencial} $V$
is real-valued and in $\text{\rm AC}[0,b]$ (hence it satisfies \ref{1ra
  propiedad del potencial} for $s\le b$).
\end{enumerate}
This is performed in two stages, the first one deals with the
case $V \equiv 0$, the second one employs perturbative methods to
consider the general case.

If $V\equiv 0$, the function $\xi$ given in Section~\ref{sec:prel} is
\begin{equation}
  \label{eq:xi-without-potential}
  \xi(x,z) = \cos(\sqrt{z}\,x) , \qquad x\in\R_+\, .
\end{equation}
Whenever we refer to the function $\xi$ corresponding to $V\equiv 0$,
we write the right-hand-side of (\ref{eq:xi-without-potential}). We
reserve the use of the symbol $\xi$ only for the case $V\not\equiv 0$. Also, throughout
this paper we use the main branch of the square root function.

As mentioned in the Introduction, for the sake of simplicity we assume $b=\pi$ and
fix $\gamma=\pi/2$. A straightforward calculation yields
\begin{equation}
\label{espectro sin potencial}
\spec \left( H_\pi\left(\pi/2\right) \right) = \{n^2 \,:\, n \in \N\cup\{0\}\} .
\end{equation}
Moreover, by substituting \eqref{eq:xi-without-potential} into
\eqref{eq:reproducing-xi}, we verify that the reproducing kernel
$\accentset{\circ}{k}_\pi(z,w)$ corresponding to the case $V\equiv 0$ satisfies
\begin{equation}   \label{norma funciones propias al cuadrado}
  \accentset{\circ}{k}_\pi(n^2,n^2) =
  \begin{cases}
    \pi & \text{ if } n=0 ,
    \\
    \frac{\pi}{2} & \text{ if }n\in\N .
  \end{cases}
\end{equation}
In the remainder of this section, we denote $\inner{\cdot}{\cdot}_{L_2(0,\pi)}$ simply as
$\inner{\cdot}{\cdot}$.

\begin{proposition}
\label{hipotesis sobremuestreo sin potencial}
Hypothesis~\ref{hyp:convergence-oversampling} holds true under the
assumption $V \equiv 0$, $b=\pi$, and $\gamma=\pi/2$.
\end{proposition}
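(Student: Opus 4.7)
The plan is to substitute everything explicitly and exploit a cancellation in the oscillatory integrals. With $\xi(x,z)=\cos(\sqrt{z}\,x)$, $\spec(H_\pi(\pi/2))=\{n^2:n\in\N\cup\{0\}\}$, and the values $\accentset{\circ}{k}_\pi(n^2,n^2)\in\{\pi,\pi/2\}$ already recorded in \eqref{norma funciones propias al cuadrado}, the series \eqref{serie undersampling hipotesis} reduces to
\begin{equation*}
\sum_{n=0}^{\infty}\frac{1}{\accentset{\circ}{k}_\pi(n^2,n^2)}\Bigl|\int_0^\pi \cos(\sqrt{z}\,x)\,\cR(x)\cos(nx)\,dx\Bigr|,
\end{equation*}
since $\overline{\cos(\sqrt{\cc z}\,x)}=\cos(\sqrt{z}\,x)$ for $x\in\R$ (using $\sqrt{\cc z}=\overline{\sqrt z}\,$). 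Using the product-to-sum identity $2\cos(\sqrt z x)\cos(nx)=\cos\bigl((\sqrt z-n)x\bigr)+\cos\bigl((\sqrt z+n)x\bigr)$, it therefore suffices to control $I_a(k):=\int_0^\pi \cR(x)\cos(kx)\,dx$ for $k=\sqrt z\pm n$.

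The key step is a single integration by parts in $I_a(k)$. Since $\cR$ is continuous on $[0,\pi]$ with $\cR(0)=1$, $\cR(\pi)=0$, and $\cR'(x)=-(\pi-a)^{-1}\chi_{(a,\pi)}(x)$, one obtains
\begin{equation*}
I_a(k)=\Bigl[\cR(x)\tfrac{\sin(kx)}{k}\Bigr]_0^\pi+\tfrac{1}{k(\pi-a)}\int_a^\pi \sin(kx)\,dx=\frac{\cos(ka)-\cos(k\pi)}{k^2(\pi-a)}.
\end{equation*}
The crucial point is that the $1/k$ contributions vanish, leaving a $1/k^2$ decay. This is precisely the gain obtained by the specific ``triangular'' shape of $\cR$ and is the analogue of the smoothing which underlies the classical oversampling formula.

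Plugging $k=\sqrt z\pm n$ gives the identity
\begin{equation*}
\int_0^\pi \cos(\sqrt z x)\cR(x)\cos(nx)\,dx=\frac{1}{2(\pi-a)}\sum_{\sigma\in\{-,+\}}\frac{\cos\bigl((\sqrt z \sigma n)a\bigr)-\cos\bigl((\sqrt z\sigma n)\pi\bigr)}{(\sqrt z\sigma n)^2},
\end{equation*}
where the apparent singularities at $z=n^2$ are removable (the numerators vanish to order two). For $z$ in a compact set $K\subset\C$, $|\im\sqrt z|$ is bounded, hence the numerators are uniformly bounded, while for $n$ sufficiently large $|\sqrt z\pm n|\ge n/2$ uniformly in $z\in K$. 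Dividing by $\accentset{\circ}{k}_\pi(n^2,n^2)\ge\pi/2$ gives an $O(1/n^2)$ bound on the $n$-th term, uniform in $z\in K$. Comparison with $\sum 1/n^2$ yields the claimed uniform absolute convergence.

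The main (and essentially only) obstacle is spotting the cancellation of the $O(1/k)$ boundary contributions, which hinges on the fact that $\cR$ is continuous across $x=a$; without that continuity one would be left with a non-summable tail. After that observation, the rest is a routine compact-set estimate using $|\sqrt z\pm n|^{-2}=O(n^{-2})$ and the boundedness of $\cos$ on horizontal strips.
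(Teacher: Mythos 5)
Your proof is correct and follows essentially the same route as the paper: it derives exactly the identity of Lemma~\ref{valor importante del producto interior} (your $\cos(k\pi)$ with $k=\sqrt z\pm n$ equals the paper's $(-1)^n\cos(\sqrt z\,\pi)$), obtains the same $O(|\sqrt z\pm n|^{-2})$ decay, and concludes by comparison with $\sum n^{-2}$ on compact sets, handling the spectral points in $K$ by the removable-singularity remark where the paper instead isolates the single offending term via Cauchy--Schwarz. Your derivation of the key identity via one integration by parts using $\cR'=-(\pi-a)^{-1}\chi_{(a,\pi)}$ is a slightly cleaner packaging of the paper's computation, but not a different argument.
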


\begin{proof}
  Consider a compact set $K$ in $\C$ such that
  $\spec(H_\pi(\pi/2))$ intersects $K$ only at one point $n_0^2$ with
  $n_0\in\N$. It will be clear at the end of the proof that there is
  no loss of generality in this assumption. First note that
  $\abs{\inner{\cos(\sqrt{\cc{z}}\,\cdot)}{\cR(\cdot)\cos(n_0\,\cdot)}}$
  is uniformly bounded in $K$ (one can use the Cauchy-Schwarz
  inequality and note that the factor depending on $z$ is continuous
  in $K$). On the other hand, by Lemma \ref{valor
    importante del producto interior}, \small
  \begin{align*}
&\sum_{n\ne n_0}
\abs{
\inners{\cos(\sqrt{\cc{z}}\,\cdot)}{\cR(\cdot)\cos(n\,\cdot)}}\nonumber
\\[1mm]
&=\frac{1}{2}\sum_{n\ne n_0}
\abs{
	\frac{\cos ( (\sqrt{z}+n)a )-(-1)^n \cos(\sqrt{z}\pi)}{(\pi-a)(\sqrt{z}+n)^2}
	+ \frac{\cos ( (\sqrt{z}-n)a )
	-(-1)^n \cos(\sqrt{z}\pi)}{(\pi-a) (\sqrt{z}-n)^2}
    }\nonumber
\\[1mm]
&\le \frac{e^{\pi\abs{\im\sqrt{z}}}}{(\pi-a)}
	\sum_{n\ne n_0}
	\left(\frac{1}{\abs{\sqrt{z} + n}^2} + \frac{1}{\abs{\sqrt{z} -n}^2}\right).
\end{align*}
\normalsize
Thus, taking into account \eqref{norma funciones propias al cuadrado},
the series \eqref{serie undersampling hipotesis}
converges uniformly in $K$.
\end{proof}

Now, let us address the case of non-zero $V$ satisfying
\ref{propiedades requeridas para el potencial}. As before we set $b=\pi$ and
$\gamma=\pi/2$. Also, we assume $\spec(H_\pi(\pi/2)) = \{\lambda_n\}_{n=0}^\infty$
ordered such that $\lambda_{n-1} < \lambda_n$ for all $n \in \N$.
The subsequent analysis make use of the following auxiliary functions.

\begin{definition}   \label{funciones auxiliares}
For each $x \in [0,\pi]$, $n \in \N$ and $z \in \C$, consider
\begin{gather}
\rho(x) \defeq \frac{1}{2} \int_0^x V(y) dy
- \frac{x}{2 \pi} \int_0^\pi V(y) dy ,\nonumber
\\
T(x,n) \defeq \xi(x,\lambda_n)-\cos(nx)-
\frac{\rho(x)}{n} \sin(nx) ,\label{eq:T-def}
\\[1.5mm]
F(x,z) \defeq \xi(x,z) - \cos(\sqrt{z} \, x) .\nonumber
\end{gather}
\end{definition}

\begin{lemma}
\label{sumandos oversampling}
Let $V$ be as in \ref{propiedades requeridas para el potencial} with $b=\pi$.
There exists $N \in \N$ such that, if
$n \ge N$, then
\small
\begin{equation*}
\abs{\inners{\xi(\cdot,\cc{z})}{\cR(\cdot)\xi(\cdot,\lambda_n)}
   - \inners{\cos(\sqrt{\cc{z}}\,\cdot)}{\cR(\cdot)\cos(n\,\cdot)}}
\le C_\pi\frac{e^{\abs{\im\sqrt{z}}\pi}}{n^2}
\left( 1 + \frac{1 + \abs{z}}{1+\pi\abs{z}^{1/2}} \right)
\end{equation*}
\normalsize
for every $z \in \C$. Here $C_\pi$ is a positive number depending on $V$.
\end{lemma}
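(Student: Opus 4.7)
The plan is to substitute the decompositions $\xi(x,\cc{z}) = \cos(\sqrt{\cc{z}}\,x) + F(x,\cc{z})$ and $\xi(x,\lambda_n) = \cos(nx) + \tfrac{\rho(x)}{n}\sin(nx) + T(x,n)$ from Definition~\ref{funciones auxiliares} into the left inner product and cancel the leading cosine--cosine contribution. What remains is a sum of five inner products that I would group as follows: two \emph{$T$-terms}, namely $\inner{\cos(\sqrt{\cc{z}}\,\cdot)}{\cR T(\cdot,n)}$ and $\inner{F(\cdot,\cc{z})}{\cR T(\cdot,n)}$; two \emph{$\tfrac{\rho}{n}\sin$-terms}, namely $\inner{\cos(\sqrt{\cc{z}}\,\cdot)}{\cR\tfrac{\rho}{n}\sin(n\,\cdot)}$ and $\inner{F(\cdot,\cc{z})}{\cR\tfrac{\rho}{n}\sin(n\,\cdot)}$; and the single \emph{$F$-cosine term} $\inner{F(\cdot,\cc{z})}{\cR\cos(n\,\cdot)}$. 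The goal for each is a bound of the form $C_\pi e^{\abs{\im\sqrt{z}}\pi}(1+\abs{\sqrt{z}})/n^2$, which is equivalent, uniformly in $z\in\C$, to the bound in the statement.

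Three a priori estimates feed into all three groups. A Gr\"onwall argument applied to the Volterra equation $\xi(x,z) = \cos(\sqrt{z}\,x) + \int_0^x \sin(\sqrt{z}(x-y))/\sqrt{z}\cdot V(y)\xi(y,z)\,dy$ gives $\abs{\xi(x,z)} \le C_V e^{\abs{\im\sqrt{z}}x}$ on $[0,\pi]$; substituting back and using $\abs{\sin(w)/w} \le C e^{\abs{\im w}}/(1+\abs{w})$ yields $\abs{F(x,z)} \le C_V e^{\abs{\im\sqrt{z}}\pi}/(1+\abs{\sqrt{z}})$, and differentiating the equation once gives $\abs{F'(x,z)} \le C_V e^{\abs{\im\sqrt{z}}\pi}$. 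The Schr\"odinger equation then implies $F''(x,z) = V(x)\xi(x,z) - zF(x,z)$, whence $\int_0^\pi \abs{F''(x,z)}\,dx \le C_V(1+\abs{\sqrt{z}}) e^{\abs{\im\sqrt{z}}\pi}$. Finally, since $V\in\text{AC}[0,\pi]$, classical Sturm--Liouville asymptotics furnish $\norm{T(\cdot,n)}_\infty = O(1/n^2)$; this immediately handles the two $T$-terms through $\norm{\cR}_\infty\le 1$, $\abs{\cos(\sqrt{\cc{z}}\,x)}\le e^{\abs{\im\sqrt{z}}\pi}$, and the preceding bound on $F$.

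The remaining three terms require integration by parts. For the two $\tfrac{\rho}{n}\sin$-terms, a single IBP using $\sin(nx) = -\tfrac{d}{dx}[\cos(nx)/n]$ is enough: the boundary contributions vanish because $\rho(0)=0$, $F(0,z)=0$, and $\cR(\pi)=0$. Differentiating the remaining factor produces three pieces, among which the one containing $\sqrt{z}\sin(\sqrt{z}\,x)$ (resp.\ $F'(x,z)$) supplies the extra $\abs{\sqrt{z}}$, while the pieces with $\cR'$ and $\rho'$ are absorbed by $\norm{\rho}_\infty\le\tfrac12\norm{V}_{L_1}$, $\norm{\cR'}_\infty=(\pi-a)^{-1}$, and $\norm{\rho'}_{L_1}\le\norm{V}_{L_1}$. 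For the more delicate $F$-cosine term I would split the integral at $x=a$ to handle the kink of $\cR$, IBP once on each subinterval (boundary terms at $x=a$ cancel by continuity of $\cR$ and $F$, those at $0$ and $\pi$ vanish by $\sin(0)=\sin(n\pi)=0$), and IBP a second time (boundary terms at $0$ vanish by $F'(0,z)=0$, at $\pi$ by $\cR(\pi)=0$, and at $a$ cancel by continuity of $\cR$, $F$, $F'$); the surviving integrand is bounded using the $F$, $F'$, $F''$ estimates above.

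Summing the five contributions and choosing $N$ large enough to swallow the implicit constant in $\norm{T(\cdot,n)}_\infty = O(1/n^2)$, all terms are dominated by $C_\pi e^{\abs{\im\sqrt{z}}\pi}(1+\abs{\sqrt{z}})/n^2$. The inequality of the lemma then follows because $1+\abs{\sqrt{z}}$ is equivalent, up to constants depending only on $\pi$, to $1+(1+\abs{z})/(1+\pi\abs{z}^{1/2})$ for every $z\in\C$. The main obstacle I expect is the double IBP in the $F$-cosine term: one must verify that $V\in\text{AC}$ really does make $F''$ pointwise finite and produce the stated $L^1$ bound, and carefully track the boundary contributions at the interior point $a$ so that they cancel at both integration steps. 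A secondary subtlety is to combine the $\abs{z}$-growth of $F''$ with the $1/(1+\abs{\sqrt{z}})$ decay of $F$ so that only a single factor $\abs{\sqrt{z}}$ survives in the final bound.
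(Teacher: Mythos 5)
Your proposal is correct and follows essentially the same route as the paper: the identical five-term decomposition via $F$, $\tfrac{\rho}{n}\sin$, and $T$, the same a priori bounds on $\xi$, $F$, $F'$, $F''$ and $\norm{T(\cdot,n)}_\infty=\cO(n^{-2})$, and the same integration-by-parts treatment (split at $x=a$) of the three non-$T$ terms, which the paper delegates to Lemmas~\ref{lem:integrals-1-3}, \ref{int 2 oversam} and \ref{int 1 oversam}. The only imprecision is that in the second integration by parts of the $F$-cosine term the boundary contributions at $x=a$ do not fully cancel (since $\cR'$ jumps there); the residual is of order $\abs{F(a,z)}/n^2$ and harmless, so the conclusion stands.
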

\begin{proof}
In terms of the functions introduced in Definition \ref{funciones auxiliares},
one writes
  \begin{multline}\label{eq:equality-with-terms}
  \big\langle\xi(\cdot,\cc{z}),\cR(\cdot)\xi(\cdot,\lambda_n)\big\rangle
    -
   \inners{\cos(\sqrt{\cc{z}}\,\cdot)}{\cR(\cdot)\cos(n\,\cdot)}
\\[2mm]
  =
    \int_0^\pi \Big[ \cos(\sqrt{z}x)\cR(x)\frac{\rho(x)}{n}\sin(nx)
    + F(x,z)\cR(x)\frac{\rho(x)}{n}\sin(nx)
\\[2mm]
    + F(x,z)\cR(x)\cos(nx)
    + \cos(\sqrt{z}x)\cR(x)T(x,n)
  \\[2mm]  + F(x,z)\cR(x)T(x,n)\Big] dx .
  \end{multline}
It will be shown that each of the five terms on the right-hand side of
\eqref{eq:equality-with-terms} is appropriately bounded. For the first
term, one uses the inequality (\ref{eq:int2}) of
Lemma~\ref{lem:integrals-1-3} and the first inequality of
Lemma~\ref{int 2 oversam}. The estimate of the second term is obtain
by combining (\ref{eq:int3}) of Lemma~\ref{lem:integrals-1-3} and the
second inequality of Lemma~\ref{int 2 oversam}. The third term on the right-hand side of
\eqref{eq:equality-with-terms} is estimated in Lemma~\ref{int 1
  oversam}.

As regards the fourth and fifth terms in
\eqref{eq:equality-with-terms}, one proceeds as follows.
From Lemma~\ref{lemma:nuevo}\ref{eq:asymp-T}, it follows that
\begin{equation*}
  \abs{T(x,n)}\le \frac{D}{n^2},\quad D>0,
\end{equation*}
uniformly with respect to $x\in [0,\pi]$ for $n$ sufficiently
large. Also, $\abs{\mathcal{R}(x)}\le 1$ according to \eqref{def de
  funcion R}. Therefore, one has
\begin{equation}
  \label{eq:first-A7}
  \abs{\int_0^\pi T(x,n) \cR(x) \cos(\sqrt{z} x) dx} \le
    \frac{C_1}{n^2}\, e^{\abs{\im\sqrt{z}}\pi}
\end{equation}
since
\begin{equation*}
  \abss{\cos(\sqrt{z}x)}\le \exp(\abss{\im\sqrt{z}}\pi),\quad x\in [0,\pi].
\end{equation*}
The bound for the remaining term follows by a similar reasoning taking into
account \eqref{estimado funciones propias con potencial}. Thus,
\begin{equation}
  \label{eq:second-A7}
  \abs{\int_0^\pi T(x,n) \cR(x) F(x,z) dx} \le
    \frac{C_2}{n^2} \frac{\pi^2}{1+\pi\abs{z}^{1/2}}\,e^{\abs{\im\sqrt{z}}\pi} .
\end{equation}
By combining the estimates of the first three terms, together with
\eqref{eq:first-A7} and \eqref{eq:second-A7}, the bound of the
statement is established.
\end{proof}

\begin{proposition}
\label{hipotesis sobremuestreo caso general}
Let $V$ be as in \ref{propiedades requeridas para el potencial}. If
$b=\pi$ and $\gamma=\pi/2$, then Hypothesis~\ref{hyp:convergence-oversampling}
holds true.
\end{proposition}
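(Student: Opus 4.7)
The plan is to reduce the convergence of the series \eqref{serie undersampling hipotesis} for the general potential to the already established free case (Proposition~\ref{hipotesis sobremuestreo sin potencial}) via a direct comparison argument, using Lemma~\ref{sumandos oversampling} to control the discrepancy between the perturbed and free inner products in the large-index tail.

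Let $\{\lambda_n\}_{n=0}^\infty$ enumerate $\spec(H_\pi(\pi/2))$ in increasing order. First I would fix a compact set $K\subset\C$ and choose $N\in\N$ as provided by Lemma~\ref{sumandos oversampling}. The series is then split into a finite part $0\le n<N$ and a tail $n\ge N$. For the finite part, each individual inner product $\inners{\xi(\cdot,\cc z)}{\cR(\cdot)\xi(\cdot,\lambda_n)}$ is continuous in $z\in K$ (by Cauchy--Schwarz and the entireness of $\xi(\cdot,z)$ as an $L_2(0,\pi)$-valued map), and the denominators $k_\pi(\lambda_n,\lambda_n)=\norm{\xi(\cdot,\lambda_n)}^2_{L_2(0,\pi)}$ are strictly positive, since $\xi(\cdot,\lambda_n)\not\equiv 0$. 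Hence the finite part is uniformly bounded on $K$.

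For the tail, I would apply the triangle inequality to the numerator,
\begin{equation*}
\abs{\inners{\xi(\cdot,\cc{z})}{\cR(\cdot)\xi(\cdot,\lambda_n)}}
 \le \abs{\inners{\cos(\sqrt{\cc{z}}\,\cdot)}{\cR(\cdot)\cos(n\,\cdot)}}
  + C_\pi\frac{e^{\abs{\im\sqrt{z}}\pi}}{n^2}
     \left(1+\frac{1+\abs{z}}{1+\pi\abs{z}^{1/2}}\right),
\end{equation*}
using the bound of Lemma~\ref{sumandos oversampling}. Summed over $n\ge N$, the second term contributes a $\sum 1/n^2$ series with coefficient continuous in $z$, hence uniformly bounded on $K$. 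Summed over $n\ge N$, the first term is exactly (up to the denominator) the tail treated in the proof of Proposition~\ref{hipotesis sobremuestreo sin potencial}, where it was shown to converge uniformly on compact sets.

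The main obstacle is matching the normalization factors $k_\pi(\lambda_n,\lambda_n)^{-1}$, which are $2/\pi$ in the free case but depend on $V$ in general. I would handle this by using the regular Sturm--Liouville asymptotics $\lambda_n=n^2+O(1)$ for Neumann boundary conditions together with expression \eqref{eq:T-def} in Definition~\ref{funciones auxiliares}, which gives $\xi(\cdot,\lambda_n)=\cos(n\,\cdot)+O(1/n)$ in $L_2(0,\pi)$; consequently $k_\pi(\lambda_n,\lambda_n)\to\pi/2$ as $n\to\infty$, so the sequence $k_\pi(\lambda_n,\lambda_n)^{-1}$ is uniformly bounded for $n\ge N$ (enlarging $N$ if necessary). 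Combining this with the two tail estimates above yields uniform convergence of \eqref{serie undersampling hipotesis} on $K$, and since $K$ was arbitrary the proposition follows.
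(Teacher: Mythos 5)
Your proposal is correct and follows essentially the same route as the paper: both reduce to the free case of Proposition~\ref{hipotesis sobremuestreo sin potencial} by means of Lemma~\ref{sumandos oversampling}, plus control of the normalizations $k_\pi(\lambda_n,\lambda_n)^{-1}$ via the asymptotics in Lemma~\ref{lemma:nuevo}. The only (immaterial) difference is that the paper bounds the difference of the full normalized terms by $\cO(n^{-2})$, using $k_\pi(\lambda_n,\lambda_n)^{-1}-\accentset{\circ}{k}_\pi(n^2,n^2)^{-1}=\cO(n^{-2})$, whereas you only invoke uniform boundedness of the reciprocals together with the absolute summability of the free inner products, which suffices.
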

\begin{proof}
From Lemma~\ref{lemma:nuevo}\ref{eq:asymp-k} we know that
$k_\pi(\lambda_n,\lambda_n)-\accentset{\circ}{k}_\pi(n^2,n^2)=\cO(n^{-2})$ as
$n \to \infty$. This implies that
\begin{equation*}
  k_\pi(\lambda_n,\lambda_n)\ge \frac{\accentset{\circ}{k}_\pi(n^2,n^2)}{2}=\frac{\pi}{4}
\end{equation*}
for $n$ suficiently large, where we have used \eqref{norma funciones propias al cuadrado}. Hence,
\begin{equation*}
\abs{\frac{1}{k_\pi(\lambda_n,\lambda_n)} - \frac{1}{\accentset{\circ}{k}_\pi(n^2,n^2)}}
=\frac{\abs{k_\pi(\lambda_n,\lambda_n)-\frac{\pi}2}}{\frac{\pi}2k_\pi(\lambda_n,\lambda_n)}
\le\frac{8}{\pi^2}\abs{k_\pi(\lambda_n,\lambda_n)-\frac{\pi}2}
\end{equation*}
for $n$ suficiently large. Again resorting to
Lemma~\ref{lemma:nuevo}\ref{eq:asymp-k}, one obtains
\begin{equation}   \label{inverso mult normas eigen funciones}
\frac{1}{k_\pi(\lambda_n,\lambda_n)} - \frac{1}{\accentset{\circ}{k}_\pi(n^2,n^2)}
= \cO(n^{-2})   ,  \quad n \to \infty    .
\end{equation}
Due to Lemma \ref{sumandos oversampling} and
(\ref{inverso mult normas eigen funciones}) there exists $N\in \N$
such that, if $n \ge N$, then
\begin{equation*}
\abs{\frac{\inners{\xi(\cdot,\cc{z})}{\cR(\cdot)\xi(\cdot,\lambda_n)}}
	{k_\pi(\lambda_n,\lambda_n)} -
	\frac{\inner{\cos(\sqrt{\cc{z}}\,\cdot)}{\cR(\cdot)\cos(n\,\cdot)}}
	{\accentset{\circ}{k}_\pi(n^2,n^2)}}
 	\le
	\frac{c_1(z)}{n^2} ,
\end{equation*}
for all $z \in \C$, and where $c_1:\C\to\R$ is a positive continuous function.
As a consequence of the previous inequality, there exists another positive continuous
function $c_2:\C\to\R$ such that
\begin{equation*}
\sum_{n=0}^\infty
\abs{\frac{\inners{\xi(\cdot,\cc{z})}{\cR(\cdot)\xi(\cdot,\lambda_n)}}
	{k_\pi(\lambda_n,\lambda_n)} -
	\frac{\inner{\cos(\sqrt{\cc{z}}\,\cdot)}{\cR(\cdot)\cos(n\,\cdot)}}
	{\accentset{\circ}{k}_\pi(n^2,n^2)}}
	\le
	c_2(z) .
\end{equation*}
Hence, by Proposition~\ref{hipotesis sobremuestreo sin potencial},
the series \eqref{serie undersampling hipotesis}
converges uniformly in compact subsets of $\C$.
\end{proof}

Arguing as in the paragraph below
Hypothesis~\ref{hyp:convergence-oversampling}, one arrives at the
following assertion in which the oversampling procedure is established
(see the Introduction).

\begin{theorem}
\label{thm:main-oversampling}
Suppose $V$ obeys \ref{propiedades requeridas para el potencial} with $b=\pi$.
Consider $\cB_a$ with $a\in(0,\pi)$. Then, for every compact set $K\subset\C$, there
exist a constant $C(a,K,V)>0$ such that
\[
\abs{ f(z)-\widetilde{f}(z) } \le C(a,K,V)\norm{\epsilon}_\infty, \quad z\in K,
\]
for all $f(z)\in\cB_a$, where $\epsilon=\{\epsilon_t\}$ is any bounded
real sequence and $\widetilde{f}(z)$ is given by \eqref{f con
  perturbaciones en los pntos de muestreo} with $b=\pi$ and $\gamma=\pi/2$.
\end{theorem}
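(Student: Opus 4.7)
The plan is to read off Theorem \ref{thm:main-oversampling} as a direct corollary of Proposition \ref{hipotesis sobremuestreo caso general}. Once one knows that the series of absolute values appearing in Hypothesis \ref{hyp:convergence-oversampling} converges uniformly on compact subsets of $\C$, the error estimate reduces to a two-line triangle-inequality computation, and the uniformity over $f\in\cB_a$ is automatic because $f$ cancels out of the difference $\widetilde{f}-f$.

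First I would note that any $f\in\cB_a$ can be written as $\inner{\xi(\cdot,\cc{z})}{\varphi(\cdot)}_{L_2(0,a)}$ for some $\varphi\in L_2(0,a)$, and that extending $\varphi$ by zero to $L_2(0,\pi)$ as in \eqref{phi con caracteristicas} leaves the value $f(z)$ unchanged. Consequently \eqref{expresion de g} applies with $b=\pi$ and $\gamma=\pi/2$, yielding the reconstruction
\begin{equation*}
f(z)=\sum_{t\in\spec(H_\pi(\pi/2))}\widetilde{\mathcal{K}}_{a\pi}(z,t)\,f(t),\qquad z\in\C,
\end{equation*}
with uniform convergence on compact subsets. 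Subtracting this from the defining formula \eqref{f con perturbaciones en los pntos de muestreo} for $\widetilde{f}$, the sample values $f(t)$ cancel and the triangle inequality gives
\begin{equation*}
\abs{\widetilde{f}(z)-f(z)}\le\norm{\epsilon}_\infty\sum_{t\in\spec(H_\pi(\pi/2))}\abs{\widetilde{\mathcal{K}}_{a\pi}(z,t)}.
\end{equation*}

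To close the argument I would invoke Proposition \ref{hipotesis sobremuestreo caso general}: in view of the definition \eqref{eq:def-K} of $\widetilde{\mathcal{K}}_{a\pi}$, the remaining sum is exactly the series \eqref{serie undersampling hipotesis}, which that proposition shows to converge uniformly on compact subsets of $\C$. For any compact $K\subset\C$ one may therefore set
\begin{equation*}
C(a,K,V):=\sup_{z\in K}\sum_{t\in\spec(H_\pi(\pi/2))}\abs{\widetilde{\mathcal{K}}_{a\pi}(z,t)}<\infty,
\end{equation*}
and this constant, depending only on $a$, $K$, and $V$, delivers the stated inequality uniformly over $f\in\cB_a$ and $z\in K$.

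The real work has already been carried out upstream: the substantive obstacle in this program is proving Hypothesis \ref{hyp:convergence-oversampling}, which is handled by the perturbative comparison with the $V\equiv 0$ case in Proposition \ref{hipotesis sobremuestreo sin potencial} and Lemma \ref{sumandos oversampling}. Relative to that machinery, the proof of Theorem \ref{thm:main-oversampling} itself amounts to a short bookkeeping step, whose only delicate point is verifying that the zero-extension of $\varphi$ preserves the value of $f$ so that the oversampling identity \eqref{expresion de g} is legitimately available.
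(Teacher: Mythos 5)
Your proposal is correct and matches the paper's own argument: the paper likewise derives the theorem by combining the identity \eqref{expresion de g} (valid because the zero-extension of $\varphi$ satisfies $\cR\varphi=\varphi$) with the triangle-inequality bound stated just below Hypothesis~\ref{hyp:convergence-oversampling}, and then invokes Proposition~\ref{hipotesis sobremuestreo caso general} to get uniform convergence of the series \eqref{serie undersampling hipotesis} on compact sets. Taking the supremum over $K$ to define $C(a,K,V)$ is exactly the intended bookkeeping step.
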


\section{Undersampling} \label{undersampling}

In this section, we treat undersampling of functions in
$\mathcal{B}_b\setminus\mathcal{B}_a$ ($a<b$) with the sampling points given
by the spectrum of $S_a(\gamma)$ as explained in the Introduction.
\begin{hypothesis}
  \label{hyp:absolute-convergence-subsampling}
  For $a<b$ and each $z \in \C$, the series
\begin{equation}
 \label{serie hipotesis undersampling}
  \sum_{t\in\spec(H_a(\gamma))} \frac{k_a(t,\cc{z})}{k_a(t,t)} \xi(x,t)
\end{equation}
converges absolutely and uniformly with respect to $x \in
[0,b]$.
\end{hypothesis}

\begin{remark}
  \label{rem:convergence-l2-and-one-term}
Note that \eqref{eq:reproducing-xi} and
\eqref{eq:decomposition-function-L2} imply that the series
\begin{equation}
\label{coef de fourier de
xi}
\sum_{t\in\spec(H_a(\gamma))} \frac{k_a(t,\cc{z})}{k_a(t,t)}
\xi(\cdot,t)
\end{equation}
converges to $\xi(\cdot,z)$ in $L_2(0,a)$ for each $z\in\C$. Due to
\eqref{eq:kernel-basis}, if $z=\lambda\in\spec(H_a(\gamma))$, then
$k_a(t,\lambda)=0$ for
$t\in\spec(H_a(\gamma))\setminus\{\lambda\}$, in which case 
the series \eqref{coef de fourier de xi} and
\eqref{serie hipotesis undersampling} have only one term.
\end{remark}

\begin{lemma}
\label{propiedades de xi extendida}
Assume that Hypothesis~\ref{hyp:absolute-convergence-subsampling} is met. Define
\begin{equation*}
\xi^{ext}_a(x,z) \defeq \sum_{t\in\spec(H_a(\gamma))} \frac{k_a(t,\cc{z})}{k_a(t,t)}
\xi(x,t)  , \quad  x \in [0,b] , \quad z \in\C  \,.
\end{equation*}
Then, for each $z\in\C$,
\begin{enumerate}[label={(\roman*)}]
\item $\xi^{ext}_a(\cdot,z)$ is continuous in $[0,b]$,
	\label{lem:xi-ext-is-continuous}

\item $\xi^{ext}_a(x,z) = \xi(x,z)$ for a. e. $x \in [0,a]$, and \label{lem:xi-ext-is-xi}
\item the function
	$h_a(z):=\displaystyle\sup_{x \in [a,b]} \abss{\xi^{ext}_a(x,z) - \xi(x,z)}$
	is continuous in $\C$. \label{lem:h-is-continuous}
\end{enumerate}
Moreover,
\begin{enumerate}[resume*]
\item if $\psi \in L_2(0,b)$ and $g(z)\in \cB_b$ are related by the
	isometry \eqref{eq:map-to-dB}, then
	\begin{equation}   \label{xi ext prod int}
	\inner{ \xi^{ext}_a(\cdot,\cc{z}) }{ \psi(\cdot) }_{L_2(0,b)}  =
	\sum_{t\in\spec(H_a(\gamma))} \frac{k_a(t,\cc{z})}{k_a(t,t)}  g(t) ,
	\qquad z \in \C .
	\end{equation}
	\label{lem:xi-ext-inner-psi}
\end{enumerate}
\end{lemma}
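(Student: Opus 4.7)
The plan is to dispose of items (i), (ii), and (iv) quickly from Hypothesis~\ref{hyp:absolute-convergence-subsampling} and Remark~\ref{rem:convergence-l2-and-one-term}, and to devote the bulk of the effort to (iii).

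For (i), I would use that each $\xi(\cdot,t)$ is continuous on $[0,b]$ as a classical solution of the Schr\"odinger equation, so every partial sum of the defining series of $\xi^{ext}_a(\cdot,z)$ is continuous on $[0,b]$; the uniform convergence on $[0,b]$ supplied by the hypothesis then transfers continuity to the limit. For (ii), the same uniform convergence restricted to $[0,a]$ forces convergence in $L_2(0,a)$, while Remark~\ref{rem:convergence-l2-and-one-term} identifies the $L_2(0,a)$-limit of the same series as $\xi(\cdot,z)$; uniqueness of the $L_2$-limit then gives $\xi^{ext}_a(\cdot,z)=\xi(\cdot,z)$ a.e.\ on $[0,a]$. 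For (iv), uniform convergence on $[0,b]$ justifies interchanging the series with the $L_2(0,b)$-inner product, and I would then use that $t\in\R$ and that $\xi$ is real entire in its second argument to obtain $\inner{\xi(\cdot,t)}{\psi}_{L_2(0,b)}=\inner{\xi(\cdot,\cc{t})}{\psi}_{L_2(0,b)}=g(t)$.

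The crux is (iii). By (ii), the difference $E(x,z):=\xi^{ext}_a(x,z)-\xi(x,z)$ vanishes identically on $[0,a]$ (it vanishes a.e.\ there and both terms are continuous in $x$), so $h_a(z)=\sup_{x\in[a,b]}\abs{E(x,z)}$. Since $z\mapsto\xi(\cdot,z)\in C[a,b]$ is continuous by the standard theorem on continuous dependence of ODE solutions on the spectral parameter, the problem reduces to showing that $z\mapsto\xi^{ext}_a(\cdot,z)\in C[a,b]$ is continuous. Every finite truncation $\sum_{\abs{t}\le N}\frac{k_a(t,\cc{z})}{k_a(t,t)}\xi(x,t)$ is jointly continuous in $(x,z)\in[0,b]\times\C$, because $k_a(t,\cc{z})$ is entire in $z$ and $\xi(\cdot,t)$ is continuous in $x$, so it suffices to upgrade the hypothesis (uniform in $x$, pointwise in $z$) to joint uniform convergence on $[0,b]\times K$ for every compact $K\subset\C$.

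The hard part will be this upgrade. The attack I would try is to estimate tails through the Cauchy--Schwarz-type bound $\abs{k_a(t,\cc{z})}/k_a(t,t)\le\sqrt{k_a(z,z)/k_a(t,t)}$ extracted from \eqref{eq:reproducing-xi}, combined with local boundedness of $k_a(z,z)=\norm{\xi(\cdot,z)}_{L_2(0,a)}^2$ and uniform bounds for $\xi(\cdot,t)$ over $[0,b]$ as $t$ ranges through $\spec(H_a(\gamma))$; feeding this into the pointwise absolute summability granted by the hypothesis should deliver the desired joint uniform convergence on $[0,b]\times K$. Once $\xi^{ext}_a$ is known to be jointly continuous on $[0,b]\times\C$, the function $h_a$ is the supremum of a jointly continuous function over the compact set $[a,b]$, and hence depends continuously on $z$, completing (iii).
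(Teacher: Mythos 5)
Your items \ref{lem:xi-ext-is-continuous}, \ref{lem:xi-ext-is-xi} and \ref{lem:xi-ext-inner-psi} coincide with the paper's argument: continuity of the uniform limit of continuous partial sums, identification of the $L_2(0,a)$-limit via Remark~\ref{rem:convergence-l2-and-one-term}, and interchange of sum and inner product (the paper invokes dominated convergence, you invoke uniform convergence on the bounded interval; these are equivalent here). The difference is in \ref{lem:h-is-continuous}. The paper disposes of it by citing Lemma~\ref{lemma:analyis-exercise}, which presupposes that $(z,x)\mapsto\abss{\xi^{ext}_a(x,z)-\xi(x,z)}$ is \emph{jointly} continuous; you are right to flag that Hypothesis~\ref{hyp:absolute-convergence-subsampling}, being pointwise in $z$, does not hand this to you for free, and that the real content is upgrading to convergence that is uniform on $[0,b]\times K$ for compact $K\subset\C$.

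However, the specific attack you propose for that upgrade does not work. The Cauchy--Schwarz bound gives
\begin{equation*}
\frac{\abs{k_a(t,\cc{z})}}{k_a(t,t)}\;\le\;\frac{\sqrt{k_a(z,z)}}{\sqrt{k_a(t,t)}},
\end{equation*}
and along $\spec(H_a(\gamma))$ the quantities $k_a(t,t)$ are bounded (they tend to $\pi/2$ in the normalization of the paper), so this majorant is bounded \emph{below} in $t$ and the resulting dominating series $\sum_t M\sqrt{k_a(z,z)}/\sqrt{k_a(t,t)}$ diverges. A non-summable bound that is merely locally uniform in $z$, combined with pointwise absolute summability at each fixed $z$, does not yield uniform convergence of the tails on $K$ (an increasing pointwise-finite limit of continuous functions need not even be locally bounded), so no Weierstrass- or Dini-type conclusion follows from what you have assembled. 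The locally uniform convergence actually comes from the quantitative decay of the kernel in $t$ --- $\abss{k_a(t,\cc{z})}\lesssim\abss{t-z}^{-1}$ on compacts, as in \eqref{nucleo sin potencial lambda real} and its perturbed analogue --- which is exactly what Propositions~\ref{conv uniforme lambda real} and \ref{hipotesis undersampling caso general} establish when verifying the Hypothesis; alternatively one simply strengthens Hypothesis~\ref{hyp:absolute-convergence-subsampling} to require uniform convergence on compact subsets of $\C$, as is done in Hypothesis~\ref{hyp:convergence-oversampling}. As written, your proof of \ref{lem:h-is-continuous} has a genuine gap at this step.
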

\begin{proof}
Enumerate $\spec\left(H_a\left(\gamma\right)\right)=\{\lambda_n\}_{n=0}^\infty$
such that $\lambda_{n-1} < \lambda_n$ for all $n \in \N$.
Then \ref{lem:xi-ext-is-continuous} is a straightforward consequence of
Hypothesis~\ref{hyp:absolute-convergence-subsampling}.
Due to \ref{lem:xi-ext-is-continuous}, $\xi^{ext}_a(\cdot,z)$ is an element of $L_2(0,a)$
for each $z\in\C$. Thus, Hypothesis~\ref{hyp:absolute-convergence-subsampling} implies
 \begin{equation*}
   \lim_{m \to \infty}
   \norm{ \xi^{ext}_a(\cdot,z) -
         \sum_{n=0}^m \frac{k_a(\lambda_n,\cc{z})}{k_a(\lambda_n,\lambda_n)}
         \xi(\cdot,\lambda_n)}_{L_2(0,a)} = 0 .
\end{equation*}
This, along with Remark~\ref{rem:convergence-l2-and-one-term}, yields
\ref{lem:xi-ext-is-xi}. Item \ref{lem:h-is-continuous}
follows from Lemma~\ref{lemma:analyis-exercise}. To prove \ref{lem:xi-ext-inner-psi},
apply the dominated convergence theorem, which holds because of
Hypothesis \ref{hyp:absolute-convergence-subsampling},
\[
\inner{ \xi^{ext}_a(\cdot,\cc{z}) }{ \psi(\cdot) }_{L_2(0,b)}
	= \lim_{m \to \infty} \sum_{n=0}^m \frac{k_a(\lambda_n,\cc{z})}
	{k_a(\lambda_n,\lambda_n)} \int_0^{b} \xi(x,\lambda_n) \psi(x) \,dx .\qedhere
\]
\end{proof}

Assume that Hypothesis~\ref{hyp:absolute-convergence-subsampling} holds true.
Suppose that $\psi \in L_2(0,b)$ and $g(z) \in \cB_b$ are related by the isometry
\eqref{eq:map-to-dB}, that is,
\begin{equation}
  \label{eq:g-expression-by-inner}
g(z) = \inner{ \xi(\cdot,\cc{z}) }{ \psi(\cdot) }_{L_2(0,b)} ,
\quad  z \in \C  .
\end{equation}
Define
\begin{equation}
\label{definicion g tilde}
\widehat{g}(z) \defeq \inner{ \xi^{ext}_a(\cdot,\cc{z}) }{ \psi(\cdot) }_{L_2(0,b)} ,
\quad  z \in \C .
\end{equation}
Then, due to Lemma~\ref{propiedades de xi extendida}\ref{lem:xi-ext-is-xi},
\begin{equation*}
\abs{ g(z) - \widehat{g}(z) } =
\abs{ \int_a^{b} \big( \xi(x,z) - \xi^{ext}_a(x,z) \big) \psi(x) \, dx }
\le h_a(z) \int_a^{b} \abs{ \psi(x) } \, dx ,
\end{equation*}
where the function $h_a$ has been defined in
Lemma~\ref{propiedades de xi extendida}\ref{lem:h-is-continuous}.
Therefore, for each $\psi \in L_2(0,b)$, the difference $\abs{g(z)-\widehat{g}(z)}$
is uniformly bounded in compact subsets of $\C$.
Below we prove that Hypothesis~\ref{hyp:absolute-convergence-subsampling}
holds true when $V$ satisfies \ref{propiedades requeridas para el
  potencial} with $b>\pi$.
As in the previous section, this is performed in two stages,
the first one deals with the particular case $V \equiv 0$ and
the second one treats the general case.

In keeping with the simplification made in the previous section, we consider only
the case $a=\pi$ and $\gamma=\pi/2$.

Using trigonometric identities and equations
\eqref{eq:reproducing-xi} and \eqref{eq:xi-without-potential}
one verifies that
\begin{equation}
\label{nucleo sin potencial lambda real}
\accentset{\circ}{k}_\pi(n^2,\cc{z})
	=\int_0^\pi\cos(nx)\cos(\sqrt{z}x)dx =\frac{(-1)^{n+1}}{n^2-z}\sqrt{z}\sin(\sqrt{z}\pi)\,.
\end{equation}
whenever $n \in \N\cup\{0\}$ and $z \in \C \setminus \{n^2\}$.
Recall that $\accentset{\circ}{k}_\pi$ denotes the reproducing kernel within $\cB_\pi$
associated with $V\equiv 0$.

\begin{proposition} \label{conv uniforme lambda real}
Hypothesis~\ref{hyp:absolute-convergence-subsampling} holds true
under the assumption $V \equiv 0$, $a=\pi$, and $\gamma=\pi/2$.
\end{proposition}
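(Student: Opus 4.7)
The plan is to exploit the explicit formulas available in the case $V\equiv 0$. Since $\xi(x,n^2)=\cos(nx)$, $\accentset{\circ}{k}_\pi(n^2,n^2)$ equals $\pi$ or $\pi/2$ according to \eqref{norma funciones propias al cuadrado}, and $\accentset{\circ}{k}_\pi(n^2,\cc z)$ admits the closed form \eqref{nucleo sin potencial lambda real}, the series in question reduces to a fully explicit trigonometric series whose general term I can bound uniformly in $x \in [0,b]$ by an $n$-independent factor depending only on $z$ times $1/|n^2-z|$.

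First I would dispose of the degenerate case. If $z=n_0^{2}\in\spec(H_\pi(\pi/2))$, then $\sin(\sqrt{z}\,\pi)=0$, so \eqref{nucleo sin potencial lambda real} implies $\accentset{\circ}{k}_\pi(n^2,\cc z)=0$ for every $n\ne n_0$; by Remark~\ref{rem:convergence-l2-and-one-term}, the series \eqref{serie hipotesis undersampling} collapses to the single term $\cos(n_0 x)$ and the conclusion is trivial.

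For $z\in\C\setminus\{n^2:n\in\N\cup\{0\}\}$, I would plug \eqref{eq:xi-without-potential}, \eqref{norma funciones propias al cuadrado}, and \eqref{nucleo sin potencial lambda real} into \eqref{serie hipotesis undersampling}. Using $|\cos(nx)|\le 1$ uniformly in $x\in[0,b]$, the absolute value of the general term (for $n\ge 1$) is majorized by
\[
\frac{2}{\pi}\,\frac{|\sqrt{z}\sin(\sqrt{z}\,\pi)|}{|n^2-z|},
\]
with an analogous bound (without the factor $2$) for $n=0$. Since $|n^2-z|\ge n^2/2$ for all $n$ sufficiently large (depending on $z$), this majorant is a convergent $p$-series in $n$ multiplied by a factor depending only on $z$. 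Thus the series converges absolutely, and because the majorant is independent of $x$, the convergence is uniform with respect to $x\in[0,b]$, establishing Hypothesis~\ref{hyp:absolute-convergence-subsampling} in this setting.

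There is no real obstacle here: the only mild subtlety is making sure the closed-form expression \eqref{nucleo sin potencial lambda real} is used in the right way when $z$ approaches a point of the spectrum, which is handled by the case split above. The result is essentially the analog of the trivial Paley–Wiener computation, and the genuine work has been deferred to the perturbative step that upgrades this proposition to general $V$.
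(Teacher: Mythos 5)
Your proof is correct and follows essentially the same route as the paper's: both substitute the explicit formulas \eqref{eq:xi-without-potential}, \eqref{norma funciones propias al cuadrado} and \eqref{nucleo sin potencial lambda real}, bound $\abs{\cos(nx)}$ by $1$ uniformly in $x$, and compare with the convergent series $\sum_{n}\abs{n^2-z}^{-1}$. The only (immaterial) difference is that you dispose of the case $z\in\spec(H_\pi(\pi/2))$ separately via Remark~\ref{rem:convergence-l2-and-one-term}, whereas the paper handles it by excluding the single spectral point contained in a given compact set.
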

\begin{proof}
  Let $K$ be a compact subset of $\C$. As in the proof of
  Proposition~\ref{hipotesis sobremuestreo sin potencial}, assume
  without loss of generality that $n_0^2$ is the only point of
  $\spec(H_\pi(\pi/2))$ in $K$
  ($n_0\in\N$).  Due to \eqref{eq:xi-without-potential}--\eqref{norma
    funciones propias al cuadrado}, it suffices to show the uniform
  convergence of the series
  $\sum_{n\ne n_0}\abss{\accentset{\circ}{k}_\pi(n^2,\cc{z})}$
   in $K$. By \eqref{nucleo sin potencial lambda real}, one obtains
\begin{equation*}
  \sum_{n\ne n_0}\abs{\accentset{\circ}{k}_\pi(n^2,\cc{z})}
\le\abs{\sqrt{z}
\sin(\sqrt{z} \pi)} \sum_{n\ne n_0}\frac{1}{\abs{n^2 - z}}\,.
\qedhere
\end{equation*}
\end{proof}

Now we address the case of nontrivial potential $V$ satisfying
\ref{propiedades requeridas para el potencial} with $b>\pi$.
Let $\spec \left( H_\pi\left(\pi/2\right) \right)
=\{\lambda_n\}_{n=0}^\infty$ such that $\lambda_{n-1} < \lambda_n$
for all $n \in \N$.
We aim to study the difference
\begin{equation*}
\frac{k_\pi(\lambda_n,\cc{z})}{k_\pi(\lambda_n,\lambda_n)} \xi(x,\lambda_n) -
\frac{\accentset{\circ}{k}_\pi(n^2,\cc{z})}{\accentset{\circ}{k}_\pi(n^2,n^2)}\cos(n x) ,
\quad x \in [0,b] , \quad  z \in \C  ,
\end{equation*}
for any given $b>\pi$ and all $n\in\N$ large enough.

\begin{lemma}   \label{estimacion nucleo k V}
For any $V$ satisfying \ref{propiedades requeridas para el potencial}
with $b>\pi$,
there exists an $N\in \N$ such that, if $n \ge N$, then
\begin{equation*}
\abs{k_\pi(\lambda_n,\cc{z}) - \accentset{\circ}{k}_\pi(n^2,\cc{z})}
\le D_\pi\frac{e^{\abs{\im\sqrt{z}}\pi}}{n^2}
\left( 1 + \frac{1 + \abs{z}}{1+\pi\abs{z}^{1/2}} \right)\,,
\end{equation*}
for every $z \in \C$. Here $D_\pi$ is a positive real number depending on $V$.
\end{lemma}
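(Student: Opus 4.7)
The plan is to imitate, almost line for line, the proof of Lemma~\ref{sumandos oversampling}, the only difference being that the weight $\cR(x)$ is replaced by the constant $1$. Starting from the integral representation \eqref{eq:reproducing-xi} together with the Schwarz reflection identity $\overline{\xi(x,\cc{z})}=\xi(x,z)$ (valid since $\xi(x,\cdot)$ is real-entire), one writes
\begin{equation*}
k_\pi(\lambda_n,\cc{z})-\accentset{\circ}{k}_\pi(n^2,\cc{z})
 = \int_0^\pi\bigl[\xi(x,\lambda_n)\,\xi(x,z) - \cos(nx)\cos(\sqrt{z}\,x)\bigr]\,dx.
\end{equation*}
I would then substitute the decompositions from Definition~\ref{funciones auxiliares}, namely $\xi(x,\lambda_n)=\cos(nx)+\tfrac{\rho(x)}{n}\sin(nx)+T(x,n)$ and $\xi(x,z)=\cos(\sqrt{z}\,x)+F(x,z)$, expand the product, and cancel the leading term $\cos(nx)\cos(\sqrt{z}\,x)$. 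The resulting integrand splits into exactly the same five cross terms that appear in \eqref{eq:equality-with-terms}, except that the factor $\cR(x)$ is absent.

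Each of the five integrals is then estimated exactly as in the proof of Lemma~\ref{sumandos oversampling}. Concretely, the two terms containing the factor $\rho/n$ are controlled by Lemma~\ref{lem:integrals-1-3} in combination with Lemma~\ref{int 2 oversam}; the integral of $F(x,z)\cos(nx)$ is handled by Lemma~\ref{int 1 oversam}; and the two terms containing $T(x,n)$ are estimated by combining the uniform bound $|T(x,n)|\le D/n^2$ coming from Lemma~\ref{lemma:nuevo}\ref{eq:asymp-T} with the elementary bound $|\cos(\sqrt{z}\,x)|\le e^{|\im\sqrt{z}|\pi}$ and the bound \eqref{estimado funciones propias con potencial} on $F(x,z)$, in direct analogy with \eqref{eq:first-A7} and \eqref{eq:second-A7}. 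Summing the five contributions produces the bound in the statement, with $D_\pi>0$ depending on $V$ through $\norm{\rho}_\infty$ and the constants provided by Lemma~\ref{lemma:nuevo}.

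The only real obstacle is the verification that this transfer is legitimate, i.e., that the appendix lemmas invoked above use only $|\cR(x)|\le 1$ and never exploit the vanishing of $\cR$ outside $[0,a]$ or any other specific feature of $\cR$. A quick inspection of their statements shows that replacing $\cR$ by $1$ does not affect the estimates (it merely collapses the piecewise integration into a single integral over $[0,\pi]$), so the same functional dependence on $z$ and the same $n^{-2}$ decay are preserved and the claimed bound follows.
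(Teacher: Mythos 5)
Your proposal is correct and follows essentially the same route as the paper: the identical five-term expansion of $k_\pi(\lambda_n,\cc{z})-\accentset{\circ}{k}_\pi(n^2,\cc{z})$ obtained from Definition~\ref{funciones auxiliares}, with the two $T$-terms bounded exactly as in \eqref{eq:first-A7}--\eqref{eq:second-A7}. The only cosmetic difference is that the paper estimates the first three (unweighted) terms directly from Lemma~\ref{lem:integrals-1-3} (taking $a=\pi$ in \eqref{eq:int2} and \eqref{eq:int3}), so your detour through the $\cR$-weighted Lemmas~\ref{int 1 oversam} and~\ref{int 2 oversam} and the accompanying transfer argument is unnecessary.
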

\begin{proof}
In view of (\ref{eq:reproducing-xi}) and Definition \ref{funciones auxiliares},
\begin{align*}
k_\pi(\lambda_n,\cc{z}) &- \accentset{\circ}{k}_\pi(n^2,\cc{z})
= \int_0^\pi \Big[\cos(nx) F(x,z) + \frac{\rho(x)}{n} \sin(nx) \cos(\sqrt{z}\,x)
\\[1mm]
&+ \frac{\rho(x)}{n} \sin(nx) F(x,z) + T(x,n) \cos(\sqrt{z}\,x)
+ T(x,n) F(x,z) \Big] dx . 
\end{align*}
We proceed as in the proof of Lemma~\ref{sumandos oversampling}. The
first three terms on the right-hand side of the last equality are
estimated by Lemma~\ref{lem:integrals-1-3}. The remaining terms have
estimates obtained in the same way as the estimates
\eqref{eq:first-A7} and \eqref{eq:second-A7}.
\end{proof}

\begin{lemma}
\label{importante}
Assume that $V$ satisfies \ref{propiedades requeridas para el potencial}
with $b>\pi$. Then, the asymptotic formula
\[
\xi(x,\lambda_n) - \cos(n x) = \cO(n^{-1}) , \quad n \to \infty ,
\]
holds uniformly with respect to $x \in [0,b]$.
\end{lemma}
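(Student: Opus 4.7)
The plan is to write
\[
\xi(x,\lambda_n) - \cos(nx) = F(x,\lambda_n) + \bigl[\cos(\sqrt{\lambda_n}\,x) - \cos(nx)\bigr],
\]
using the definition of $F$ given in Definition~\ref{funciones auxiliares}, and to show that each of the two pieces is $\mathcal{O}(n^{-1})$ uniformly in $x \in [0,b]$. I deliberately avoid the expansion through $T(x,n)$ used in the preceding lemmas of this section, because the estimate $T(x,n)=\mathcal{O}(n^{-2})$ provided by Lemma~\ref{lemma:nuevo}\ref{eq:asymp-T} is restricted to the interval $[0,\pi]$ on which the boundary problem is posed, whereas the statement demands uniformity on the possibly larger interval $[0,b]$.

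For the first summand I would invoke the classical Volterra-iteration bound on $F$ that already underlies the ``estimado funciones propias con potencial'' used in the proof of Lemma~\ref{sumandos oversampling}. Because $V \in L_1(0,b)$ under hypothesis~\ref{propiedades requeridas para el potencial}, the same argument produces
\[
\abs{F(x,z)} \le \frac{C(V,b)\,e^{\abs{\im\sqrt{z}}\, x}}{1+\abs{z}^{1/2}}, \qquad x \in [0,b],\ z \in \C.
\]
Since $\lambda_n$ is real and positive for $n$ sufficiently large, $\im\sqrt{\lambda_n}=0$ and $\abs{\lambda_n}^{1/2}\asymp n$, so this bound gives $\abs{F(x,\lambda_n)} = \mathcal{O}(n^{-1})$ uniformly on $[0,b]$.

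For the second summand I would use the classical Neumann--Neumann eigenvalue asymptotics $\lambda_n = n^2 + \mathcal{O}(1)$ (valid for $V\in L_1(0,\pi)$, and in fact sharpened to $\lambda_n = n^2+\tfrac{1}{\pi}\!\int_0^\pi\! V + \mathcal{O}(n^{-1})$ under $V\in\text{AC}[0,\pi]$; these are also the inputs that drive Lemma~\ref{lemma:nuevo}). From $\lambda_n = n^2+\mathcal{O}(1)$ one gets $\sqrt{\lambda_n} - n = \mathcal{O}(n^{-1})$, whence the mean value theorem yields
\[
\abs{\cos(\sqrt{\lambda_n}\,x) - \cos(nx)} \le b\,\abs{\sqrt{\lambda_n}-n} = \mathcal{O}(n^{-1})
\]
uniformly on $[0,b]$. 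Combining the two estimates produces the assertion. The only delicate point is ensuring the $F$-bound holds on the enlarged interval $[0,b]$ rather than just $[0,\pi]$; this is exactly where the assumption $V\in\text{AC}[0,b]$ (and not merely $V\in\text{AC}[0,\pi]$) enters the argument.
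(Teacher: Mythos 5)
Your proposal is correct and takes essentially the same route as the paper: the identical splitting $\xi(x,\lambda_n)-\cos(nx)=F(x,\lambda_n)+\bigl[\cos(\sqrt{\lambda_n}\,x)-\cos(nx)\bigr]$, with the first piece controlled by the Volterra-iteration bound \eqref{estimado funciones propias con potencial} (valid on $[0,b]$ since $V\in L_1(0,b)$) and the second by the mean value theorem combined with $\sqrt{\lambda_n}=n+\cO(n^{-1})$ from Lemma~\ref{lemma:nuevo}\ref{eq:asymp-eigenvalues}. The paper likewise avoids the function $T(x,n)$ in this proof, for exactly the reason you give.
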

\begin{proof}
Using Lemma~\ref{lemma:nuevo}\ref{eq:asymp-eigenvalues} and repeating the
reasoning leading to \eqref{inverso mult normas eigen funciones}, one arrives at
\begin{equation*}
\lambda_n^{-1/2} - n^{-1} = \cO(n^{-1}) ,
\quad  n \to \infty .
\end{equation*}
This asymptotic formula and \eqref{estimado funciones propias con potencial}
yield
\begin{equation*}
\xi(x,\lambda_n) - \cos(\sqrt{\lambda_n} \, x)
	= \cO(n^{-1}) , \quad   n \to \infty .
\end{equation*}
Finally, since
\begin{equation*}
  \abs{\cos(\sqrt{\lambda_n} \, x) - \cos(n x)}=\abs{\sin(\alpha_n x)}\abs{\sqrt{\lambda_n}x-nx}\le\abs{\sqrt{\lambda_n}-n}b
\end{equation*}
for some $\alpha_n$ between $\sqrt{\lambda_n}$ and $n$, the
statement follows from Lemma~\ref{lemma:nuevo}\ref{eq:asymp-eigenvalues}.
\end{proof}

\begin{proposition}
\label{hipotesis undersampling caso general}
Let $V$ be as in \ref{propiedades requeridas para el potencial} with $b>\pi$.
Set $a=\pi$ and $\gamma=\pi/2$.
Then, Hypothesis~\ref{hyp:absolute-convergence-subsampling} holds true.
\end{proposition}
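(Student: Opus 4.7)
The plan is to reduce the claim to Proposition~\ref{conv uniforme lambda real} by a term-by-term comparison with the $V \equiv 0$ case. Fix a compact set $K \subset \C$; I will show that
\[
\sum_{n=0}^{\infty} \abs{\frac{k_\pi(\lambda_n,\cc{z})}{k_\pi(\lambda_n,\lambda_n)}\xi(x,\lambda_n)}
\]
converges uniformly for $(x,z) \in [0,b] \times K$, which in particular yields absolute convergence and uniformity in $x \in [0,b]$ for each $z \in \C$.

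The central decomposition is
\[
\frac{k_\pi(\lambda_n,\cc{z})}{k_\pi(\lambda_n,\lambda_n)}\xi(x,\lambda_n)
= \frac{\accentset{\circ}{k}_\pi(n^2,\cc{z})}{\accentset{\circ}{k}_\pi(n^2,n^2)} \cos(nx) + R_n(x,z),
\]
with the remainder further split as
\[
R_n(x,z) = A_n(z)\,\xi(x,\lambda_n) + \frac{\accentset{\circ}{k}_\pi(n^2,\cc{z})}{\accentset{\circ}{k}_\pi(n^2,n^2)}\bigl[\xi(x,\lambda_n) - \cos(nx)\bigr],
\]
where $A_n(z) := k_\pi(\lambda_n,\cc{z})/k_\pi(\lambda_n,\lambda_n) - \accentset{\circ}{k}_\pi(n^2,\cc{z})/\accentset{\circ}{k}_\pi(n^2,n^2)$. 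The goal is the bound $\abs{R_n(x,z)} \le c(z)/n^2$, valid for all $n$ sufficiently large, uniformly in $(x,z) \in [0,b] \times K$. Lemma~\ref{estimacion nucleo k V} controls $k_\pi(\lambda_n,\cc{z}) - \accentset{\circ}{k}_\pi(n^2,\cc{z})$ at rate $\cO(n^{-2})$ uniformly in $z \in K$, while \eqref{inverso mult normas eigen funciones} controls $1/k_\pi(\lambda_n,\lambda_n) - 1/\accentset{\circ}{k}_\pi(n^2,n^2)$ at the same rate; combining these with the bound $\abs{\accentset{\circ}{k}_\pi(n^2,\cc{z})} = \cO(n^{-2})$ (read off from the closed form \eqref{nucleo sin potencial lambda real} once $n$ is large enough that $n^2$ is outside $K$) gives $A_n(z) = \cO(n^{-2})$. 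Since Lemma~\ref{importante} delivers $\abs{\xi(x,\lambda_n)} \le 1 + \cO(n^{-1})$ uniformly on $[0,b]$, the first piece of $R_n$ is $\cO(n^{-2})$. For the second piece, combining $\abs{\accentset{\circ}{k}_\pi(n^2,\cc{z})} = \cO(n^{-2})$ with $\abs{\xi(x,\lambda_n) - \cos(nx)} = \cO(n^{-1})$ from Lemma~\ref{importante} yields an even smaller $\cO(n^{-3})$.

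The Weierstrass M-test then gives uniform convergence of $\sum_n \abs{R_n(x,z)}$ on $[0,b] \times K$. Proposition~\ref{conv uniforme lambda real} provides uniform convergence of $\sum_n \abs{\accentset{\circ}{k}_\pi(n^2,\cc{z})/\accentset{\circ}{k}_\pi(n^2,n^2)}$ for $z \in K$, and this immediately extends to a bound uniform in $x \in [0,b]$ via $\abs{\cos(nx)} \le 1$. A triangle inequality finishes the argument.

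The main obstacle is the behavior of $\xi(x,\lambda_n)$ on the extended interval $(\pi,b]$: there the cosines $\cos(n\,\cdot)$ are no longer part of the natural eigenexpansion on $(0,\pi)$, so neither the uniform boundedness of $\abs{\xi(x,\lambda_n)}$ nor the sharpness of $\xi(x,\lambda_n) \approx \cos(nx)$ is automatic. Lemma~\ref{importante} is exactly what provides these estimates uniformly on all of $[0,b]$, and it is this extension past $\pi$ that forces the regularity assumption \ref{propiedades requeridas para el potencial} on $V$ over the larger interval.
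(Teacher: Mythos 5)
Your proposal is correct and follows essentially the same route as the paper: both bound the difference between the $n$-th term and its $V\equiv 0$ counterpart by $c(z)/n^2$ uniformly in $x\in[0,b]$, using Lemma~\ref{estimacion nucleo k V}, Lemma~\ref{importante} and \eqref{inverso mult normas eigen funciones}, and then invoke Proposition~\ref{conv uniforme lambda real}. The only difference is that you write out explicitly the algebraic splitting of the remainder $R_n$ that the paper leaves implicit.
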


\begin{proof}
Due to Lemmas~\ref{estimacion nucleo k V} and \ref{importante}, along with
(\ref{inverso mult normas eigen funciones}), there exists $N\in \N$
and a continuous positive function $c_3:\C\to\R$ such that
\begin{equation}
\label{eq:inequality}
\abs{\frac{k_\pi(\lambda_n,\cc{z})}{k_\pi(\lambda_n,\lambda_n)}\xi(x,\lambda_n) -
\frac{\accentset{\circ}{k}_\pi(n^2,\cc{z})}{\accentset{\circ}{k}_\pi(n^2,n^2)}\cos(n\,x)}
\le
\frac{c_3(z)}{n^2} ,\quad z \in \C ,\quad x \in [0,b] .
\end{equation}
for all $n \ge N$; we note that $c_3$ may depend on $b$ and $V$.
The estimate \eqref{eq:inequality} in turn implies
\begin{equation*}
\sum_{n=0}^\infty
\abs{\frac{k_\pi(\lambda_n,\cc{z})}{k_\pi(\lambda_n,\lambda_n)} \xi(x,\lambda_n) -
\frac{\accentset{\circ}{k}_\pi(n^2,\cc{z})}{\accentset{\circ}{k}_\pi(n^2,n^2)}
\cos(n\,x)}
\le
c_4(z)
\end{equation*}
uniformly with respect to $x \in [0,b]$, where $c_4:\C\to\R$ is another continuous
positive function that may also depend on $b$ and $V$. The claimed assertion
now follows from Proposition~\ref{conv uniforme lambda real}.
\end{proof}

\begin{theorem}
\label{thm:main-subsampling}
Suppose $V$ obeys \ref{propiedades requeridas para el potencial} for $b>\pi$.
Assume that $\psi\in L_2(0,b)$ and $g(z)\in\cB_b$ are related by
\eqref{eq:g-expression-by-inner}. For 
every compact $K\subset\C$,
there exist a constant $D(b,K,V)>0$ such that
\[
\abs{g(z)-\widehat{g}(z)}
\le
D(b,K,V)\int_{\pi}^{b}\abs{\psi(x)} dx, \quad z\in K,
\]
where $\widehat{g}(z)$ is given by \eqref{definicion g tilde} with
$a=\pi$, \ie, $\widehat{g}(z)$ is given by the series \eqref{xi ext
  prod int} with $a=\pi$ and $\gamma=\pi/2$.
\end{theorem}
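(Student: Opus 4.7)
The plan is to assemble pieces already prepared in this section. First I would invoke Proposition~\ref{hipotesis undersampling caso general} to confirm that Hypothesis~\ref{hyp:absolute-convergence-subsampling} holds in the present setting ($a=\pi$, $\gamma=\pi/2$, $V$ as in \ref{propiedades requeridas para el potencial} with $b>\pi$); this validates the use of Lemma~\ref{propiedades de xi extendida} and guarantees that $\xi^{ext}_\pi(\cdot,z)$ is a well-defined continuous function on $[0,b]$ for every $z\in\C$.

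Next I would combine \eqref{eq:g-expression-by-inner} with \eqref{definicion g tilde} to write
\[
g(z) - \widehat{g}(z) = \inner{\xi(\cdot,\cc{z}) - \xi^{ext}_\pi(\cdot,\cc{z})}{\psi(\cdot)}_{L_2(0,b)}.
\]
By Lemma~\ref{propiedades de xi extendida}\ref{lem:xi-ext-is-xi} the integrand vanishes almost everywhere on $[0,\pi]$, so this inner product collapses to an integral over $[\pi,b]$. The triangle inequality, together with the definition of $h_\pi$ supplied by Lemma~\ref{propiedades de xi extendida}\ref{lem:h-is-continuous}, then produces the pointwise estimate
\[
|g(z) - \widehat{g}(z)| \le h_\pi(z) \int_\pi^b |\psi(x)|\, dx.
\]

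Finally, since $h_\pi$ is continuous on $\C$ by Lemma~\ref{propiedades de xi extendida}\ref{lem:h-is-continuous}, it attains a finite maximum on every compact $K\subset\C$, so setting $D(b,K,V) := \max_{z\in K} h_\pi(z)$ completes the proof. In effect the theorem amounts to packaging: the substantial work --- verifying Hypothesis~\ref{hyp:absolute-convergence-subsampling} via the perturbative analysis of Lemmas~\ref{estimacion nucleo k V} and \ref{importante}, and the continuity assertion underlying $h_\pi$ --- has already been carried out in the preceding results. Accordingly no genuine obstacle remains at this stage; what one must mainly be careful about is the correct matching of each earlier ingredient to the clauses of the final statement.
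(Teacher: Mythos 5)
Your proposal is correct and follows essentially the same route as the paper: the paper establishes the bound $\abs{g(z)-\widehat{g}(z)}\le h_a(z)\int_a^b\abs{\psi(x)}\,dx$ in the paragraph following Lemma~\ref{propiedades de xi extendida} (using items \ref{lem:xi-ext-is-xi} and \ref{lem:h-is-continuous}), verifies Hypothesis~\ref{hyp:absolute-convergence-subsampling} via Proposition~\ref{hipotesis undersampling caso general}, and then states the theorem with $D(b,K,V)$ obtained from the continuity of $h_\pi$ on the compact set $K$. Your assembly of these ingredients, including taking $D(b,K,V)=\max_{z\in K}h_\pi(z)$, matches the intended argument exactly.
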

\section*{Acknowledgements}
The authors thank the anonymous referee
whose pertinent comments led to an improved presentation of this work.

\appendix
\section{Auxiliary results}

\begin{lemma}
\label{lemma:analyis-exercise}
Let $Y$ be a compact interval of $\R$. Suppose $\theta: \C \times Y \to [0,\infty)$ is
continuous. Then, $\Theta: \C \to [0,\infty)$ given by
$\Theta(z) \defeq \sup\{\theta(z,y) : y \in Y\}$ is continuous.
\end{lemma}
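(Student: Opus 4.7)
\medskip

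The plan is to prove continuity at an arbitrary point $z_0\in\C$ by exploiting the local uniform continuity of $\theta$ in its first argument, which follows from the compactness of $Y$. The key observation is that, because $Y$ is a compact interval, the supremum defining $\Theta(z)$ is actually attained (so it is a maximum), and because $\theta$ is jointly continuous on $\C\times Y$, its restriction to any compact ``tube'' $\overline{B(z_0,1)}\times Y$ is uniformly continuous by the Heine--Cantor theorem.

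First, I would fix $z_0\in\C$ and set $N\defeq\overline{B(z_0,1)}$. Since $N\times Y$ is compact in $\C\times\R$, the continuous function $\theta$ is uniformly continuous there. Given $\epsilon>0$, pick $\delta\in(0,1)$ such that $|\theta(z,y)-\theta(z',y')|<\epsilon$ whenever $(z,y),(z',y')\in N\times Y$ satisfy $|z-z'|<\delta$ and $|y-y'|<\delta$. In particular, choosing $y=y'$, we obtain
\[
|\theta(z,y)-\theta(z',y)|<\epsilon\qquad\text{for all }y\in Y,\ z,z'\in N,\ |z-z'|<\delta.
\]

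Second, I would convert this pointwise-in-$y$ bound into a bound on the suprema by a standard $\sup$-inequality argument. From $\theta(z,y)\le\theta(z',y)+\epsilon$ for every $y\in Y$, taking the supremum over $y$ on the left yields $\Theta(z)\le\Theta(z')+\epsilon$, and by symmetry $\Theta(z')\le\Theta(z)+\epsilon$, so $|\Theta(z)-\Theta(z')|\le\epsilon$. Applied with $z'=z_0$ for any $z$ with $|z-z_0|<\delta$, this establishes continuity of $\Theta$ at $z_0$. Since $z_0$ was arbitrary, $\Theta$ is continuous on all of $\C$.

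There is essentially no hard step: the only ingredients are compactness of $Y$ (to guarantee the sup is finite and that $N\times Y$ is compact) and Heine--Cantor. One minor point worth writing carefully is the inequality $\sup_y\theta(z,y)\le\sup_y\theta(z',y)+\epsilon$, which is the standard ``translate then sup'' maneuver; apart from that, the argument is a direct consequence of uniform continuity on compact sets.
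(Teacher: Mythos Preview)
Your proof is correct and follows essentially the same approach as the paper's: restrict to a compact tube $\overline{B(z_0,r)}\times Y$, invoke Heine--Cantor for uniform continuity, and deduce $|\Theta(z)-\Theta(z_0)|<\epsilon$. The only cosmetic difference is that the paper tracks an explicit maximizer $\vartheta(z)\in Y$ with $\theta(z,\vartheta(z))=\Theta(z)$ to perform the sup comparison, whereas your ``translate then sup'' inequality $\sup_y\theta(z,y)\le\sup_y\theta(z',y)+\epsilon$ bypasses this and is in fact slightly cleaner.
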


\begin{proof}
For each $z \in \C$, fix $\vartheta(z) \in Y$ such that
\begin{equation}  \label{fijo punto donde supremo}
\theta\big(z,\vartheta(z)\big)=\sup\{\theta(z,y) : y \in Y\}=\Theta(z)  \,.
\end{equation}
Take an arbitrary $z_0 \in \C$. Fix $r_0 > 0$ and let
$K \defeq \{w \in \C \,:\, \abs{z_0 - w} \le r_0 \}$.
Due to the compactness of $K \times Y$, the map
$\theta \restriction_{K \times Y}$
is uniformly continuous.
Hence, given $\epsilon>0$ there exists $\delta>0$ such that
\begin{equation}   \label{continuidad uniforme}
\abs{z-w}<\delta \mbox{ and } \abs{y-v}<\delta
\mbox{ imply } \abs{\theta(z,y)-\theta(w,v)}<\frac{\epsilon}{2} \,,
\end{equation}
for any $(z,y)\,,(w,v) \in K \times Y$.
Take $w \in K$ such that $\abs{z_0-w}<\delta$.
If $v \in Y$ satisfies $\abs{\vartheta(z_0)-v}<\delta$ then,
in view of \eqref{continuidad uniforme},
\[
\abs{\theta\big(z_0,\vartheta(z_0)\big)}-\abs{\theta(w,v)} \le
\abs{\theta\big(z_0,\vartheta(z_0)\big)-\theta(w,v)} < \frac{\epsilon}{2} \,.
\]
Due to \eqref{fijo punto donde supremo} and the fact that
$\theta$ is non negative,
$\Theta(z_0)-\Theta(w)\le \Theta(z_0) - \theta(w,v) <\epsilon$.
Now, let $v \in Y$ such that $\abs{\vartheta(w)-v}<\delta$.
According to \eqref{continuidad uniforme},
\[
\abs{\theta\big(w,\vartheta(w)\big)}-\abs{\theta(z_0,v)} \le
\abs{\theta\big(w,\vartheta(w)\big)-\theta(z_0,v)} < \frac{\epsilon}{2} \,.
\]
Hence, $\Theta(w)-\Theta(z_0) \le \Theta(w)-\theta(z_0,v)<\epsilon$.
Therefore, we have proven that $-\epsilon<\Theta(z_0)-\Theta(w)<\epsilon$
whenever $\abs{z_0-w}<\delta$.
\end{proof}

The following Lemma is the analogue of \cite[Lemma\,2.2]{MR2719774}
for Neumann-like boundary conditions.
                          
\begin{lemma}        \label{lemma:cota_eigen_functions}
Given $a>0$, suppose that $V\in L_1(0,a)$.
Then, for each $z \in \C$, the unique solution of the initial value problem
\begin{align*}
-&\xi''(x,z)+V(x)\xi(x,z)=z\xi(x,z) \,,
\qquad 0\le x\le a,
\\[2mm]
&\xi(0,z)=1,  \quad \xi'(0,z)=0, 
\end{align*}
satisfies the integral equation
\begin{equation}   \label{ecuacion integral para soluciones}
\xi(x,z) = \cos \left( \sqrt{z}x \right) + \int_0^x G(z,x,y) V(y) \xi(y,z) dy \,,
\end{equation}
where
\begin{equation*}
\label{especificamos funcion de green prob de val inic}
G(z,x,y) = \frac{1}{\sqrt{z}} \sin\left(\sqrt{z}\left(x-y\right)\right)
\end{equation*}
is the corresponding Green's function. This solution satisfies the estimate
\begin{equation}
\label{estimado funciones propias con potencial}
\abs{\xi(x,z)-\cos \left( \sqrt{z}x \right)}
\le C
	 \frac{x}{1+\abs{z}^{1/2}x}e^{\abs{\im\sqrt{z}}x}
	 \int_0^x\frac{y\abs{V(y)}}{1+\abs{z}^{1/2}y}dy
\end{equation}
for some constant $C=C(a, V)>0$. Furthermore, the derivative obeys
\begin{equation}
\qquad \qquad \xi'(x,z) = -\sqrt{z}\sin \left( \sqrt{z}x \right) + \int_0^x
\frac{\partial}{\partial x}
G(z,x,y) V(y) \xi(y,z) dy \,,
\label{ecuacion integral para derivada de soluciones}
\end{equation}
and satisfies the estimate
\begin{equation}
\abs{\xi'(x,z)+\sqrt{z}\sin \left( \sqrt{z}x \right)} \le
C e^{\abs{\im\sqrt{z}}x}\int_0^x\abs{V(y)}dy
\label{estimado derivada funciones propias con potencial}.
\end{equation}
\end{lemma}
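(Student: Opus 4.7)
The plan is to treat the lemma in four stages: first derive the Volterra integral representation by variation of parameters, then establish the two standard kernel bounds, then use them together with a Gr\"onwall/Neumann-series argument to obtain the pointwise estimate on $\xi(x,z)-\cos(\sqrt{z}x)$, and finally differentiate the integral equation to handle $\xi'$.

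For the integral equation \eqref{ecuacion integral para soluciones}, I would use the fact that $\cos(\sqrt{z}x)$ and $\sin(\sqrt{z}x)/\sqrt{z}$ are linearly independent solutions of $-u''=zu$ with Wronskian equal to $1$, so Green's function for the initial value problem with vanishing Cauchy data at $0$ is precisely $G(z,x,y)=\sin(\sqrt{z}(x-y))/\sqrt{z}$. Rewriting $-\xi''+V\xi=z\xi$ as $-\xi''-z\xi=-V\xi$ and using variation of parameters produces the Volterra equation; differentiating both sides formally with respect to $x$ gives \eqref{ecuacion integral para derivada de soluciones} with kernel $\partial_x G(z,x,y)=\cos(\sqrt{z}(x-y))$. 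It remains to check that these identities indeed satisfy the correct initial conditions, which is immediate since both $G$ and $\partial_x G$ vanish at $y=x$ for the right choice of integrand, giving $\xi(0,z)=1$ and $\xi'(0,z)=0$.

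For the estimates, the two kernel bounds I would invoke repeatedly are
\begin{equation*}
\bigl\lvert\cos(\sqrt{z}\,w)\bigr\rvert \le e^{\abss{\im\sqrt{z}}\,w},
\qquad
\left\lvert\frac{\sin(\sqrt{z}\,w)}{\sqrt{z}}\right\rvert
\le C_0\,\frac{w}{1+\abs{z}^{1/2}w}\,e^{\abss{\im\sqrt{z}}\,w},
\qquad w\ge 0,
\end{equation*}
both being direct consequences of $|\sin u|\le e^{|\im u|}\min(1,|u|)$. Substituting the Volterra equation into itself, I first bound $|\xi(x,z)|$ by a Gr\"onwall-type argument applied to $\eta(x):=e^{-\abss{\im\sqrt{z}}x}|\xi(x,z)|$: the above kernel bound turns the integral equation into $\eta(x)\le 1+C_0\int_0^x\beta(x,y)|V(y)|\eta(y)\,dy$ with $\beta(x,y)=(x-y)/(1+\abs{z}^{1/2}(x-y))\le 1/\abs{z}^{1/2}$, and Gr\"onwall gives $\eta(x)\le\exp(C_0\|V\|_{L_1(0,a)}/\abs{z}^{1/2})$, hence $|\xi(y,z)|\lesssim e^{\abss{\im\sqrt{z}}y}$ uniformly on $[0,a]$. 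Plugging this bound back into the integral equation for $w(x,z):=\xi(x,z)-\cos(\sqrt{z}x)$ and splitting $\xi(y,z)=\cos(\sqrt{z}y)+w(y,z)$, the first piece produces the leading contribution which, after using $\sin(\sqrt{z}(x-y))\cos(\sqrt{z}y)=\tfrac12[\sin(\sqrt{z}x)+\sin(\sqrt{z}(x-2y))]$ and bookkeeping on the factors $y/(1+\abs{z}^{1/2}y)$, yields the desired form; the second piece is then absorbed by a second application of Gr\"onwall, giving \eqref{estimado funciones propias con potencial}.

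The derivative estimate \eqref{estimado derivada funciones propias con potencial} is cheaper: differentiating \eqref{ecuacion integral para soluciones} brings out $\cos(\sqrt{z}(x-y))$ as the kernel, which is bounded by $e^{\abss{\im\sqrt{z}}(x-y)}$, so that combining with the already-established bound $|\xi(y,z)|\le C e^{\abss{\im\sqrt{z}}y}$ gives at once
\begin{equation*}
\bigl\lvert\xi'(x,z)+\sqrt{z}\sin(\sqrt{z}x)\bigr\rvert
\le C\,e^{\abss{\im\sqrt{z}}x}\int_0^x|V(y)|\,dy.
\end{equation*}
The main technical obstacle is the third stage: extracting the precise prefactor $x/(1+\abs{z}^{1/2}x)$ outside the integral while keeping $y/(1+\abs{z}^{1/2}y)$ inside, rather than the naive mixed factor $(x-y)/(1+\abs{z}^{1/2}(x-y))$. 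This requires exploiting the inequality $(1+\abs{z}^{1/2}x)\le(1+\abs{z}^{1/2}(x-y))(1+\abs{z}^{1/2}y)$ to redistribute denominators along the lines of the standard P\"oschel--Trubowitz iteration, which for the cosine solution is more delicate than in the Dirichlet case cited from \cite{MR2719774} and constitutes the only non-routine part of the argument.
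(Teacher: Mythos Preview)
Your approach is correct but takes a somewhat different route from the paper's. The paper does not use a Gr\"onwall argument at all: it runs a direct Picard/Neumann iteration, setting $\xi_0(x,z)=\cos(\sqrt{z}x)$, $\xi_{n+1}(x,z)=\int_0^x G(z,x,y)V(y)\xi_n(y,z)\,dy$, proving by induction the bound
\[
\abs{\xi_{n+1}(x,z)}
\le
	\frac{\norm{V}_{L_1}C_0^{n+1}}{(n+1)!}
	\frac{x}{1+\abss{z}^{1/2}x}e^{\abs{\im\sqrt{z}}x}
	\left(\int_{0}^{x}\frac{y\abs{V(y)}}{1+\abs{z}^{1/2}y}dy\right)^{n},
\]
and then summing. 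The ``non-routine'' difficulty you flag at the end---getting $x/(1+\abs{z}^{1/2}x)$ outside and $y/(1+\abs{z}^{1/2}y)$ inside---evaporates in this setup: the outer factor comes from the trivial monotonicity $(x-y)/(1+c(x-y))\le x/(1+cx)$ applied to the Green kernel bound, and the inner factor is already present in the induction hypothesis on $\xi_n(y,z)$. No submultiplicativity inequality, no trigonometric identity $\sin(\sqrt{z}(x-y))\cos(\sqrt{z}y)=\tfrac12[\cdots]$, and no second Gr\"onwall pass are needed. Your two-step scheme (crude bound on $\abs{\xi}$, then plug back into the equation for $w=\xi-\cos$) would also work, but it is a detour compared to the one-shot iteration in the paper; the explicit Neumann series has the added advantage that convergence and the estimate come out simultaneously.
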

\begin{proof}
Define
\[
\xi_0(x,z) \defeq \cos \left( \sqrt{z}x \right),
\quad
\xi_{n+1}(x,z) \defeq \int_{0}^{x}G(z,x,y)V(y)\xi_n(y,z)dy
\label{sumandos que componen a phi},
\quad n \in \N.
\]
Since $\abs{\cos \left( \sqrt{z}x \right)} \le \exp(\abs{\im\sqrt{z}}x)$
and
\begin{equation*}
\abs{G(z,x,y)}
\le C_0 \, \frac{x}{1+\abss{z}^{1/2}x}e^{\abs{\im\sqrt{z}}(x-y)},
\quad 0 \le y \le x,
\end{equation*}
 for some constant $C_0>0$ (\cf \cite[Lemma\,A.1]{MR2719774}), one has
\[
\abs{\xi_{1}(x,z)}
\le
C_0\norm{V}_{L_1} \frac{x}{1+\abss{z}^{1/2}x}e^{\abs{\im\sqrt{z}}x}.
\]
An induction argument then shows
\begin{equation}
\label{cota para sumando generico}
\abs{\xi_{n+1}(x,z)}
\le
	\frac{\norm{V}_{L_1}C_0^{n+1}}{(n+1)!}
	\frac{x}{1+\abss{z}^{1/2}x}e^{\abs{\im\sqrt{z}}x}
	\left(\int_{0}^{x}\frac{y\abs{V(y)}}{1+\abs{z}^{1/2}y}dy\right)^{n}
\end{equation}
for all $n\in\N$. It follows that
\[
\xi(x,z) \defeq \sum_{n=0}^\infty \xi_n(x,z)
\]
converges uniformly with respect to $x\in[0,a]$ for all $z\in\C$ and satisfies
\eqref{ecuacion integral para soluciones}.
The estimate \eqref{estimado funciones propias con potencial} readily follows from
\eqref{cota para sumando generico} after noticing that
\[
\int_0^x\frac{y\abs{V(y)}}{1+\abs{z}^{1/2}y}dy
	\le a\norm{V}_{L_1}.
\]
The assertions \eqref{ecuacion integral para derivada de soluciones}
and \eqref{estimado derivada funciones propias con potencial} are proved
by similar arguments so we omit the details.
\end{proof}
The next results refer to the functions $\rho$, $T$, and $F$ introduced in
Definition~\ref{funciones auxiliares}, as well as the reproducing kernel
$k_b(z,w)$ from \eqref{eq:reproducing-xi} and the particular case
$\accentset{\circ}{k}_b(z,w)$ when $V\equiv 0$.

\begin{lemma}
\label{lemma:nuevo}
Assume that $V$ satisfies \ref{propiedades requeridas para el
  potencial} with $b=\pi$.
Let $H_{\pi}(\pi/2)$ be the selfadjoint operator defined in accordance with
\eqref{selfadjoint extensions}. Enumerate $\spec(H_{\pi}(\pi/2))$ in increasing order
and denote $\spec(H_{\pi}(\pi/2))=\{\lambda_n\}_{n=0}^\infty$. Then, the following
assertions hold true.
\begin{enumerate}[label={(\roman*)}]
\item $\sqrt{\lambda_n}=n+\cO(n^{-1})$ as $n\to\infty$, \label{eq:asymp-eigenvalues}
\item $T(x,n)=\cO(n^{-2})$ as $n\to\infty$, uniformly with respect to
	$x\in[0,\pi]$,\label{eq:asymp-T}
\item $k_\pi(\lambda_n,\lambda_n)=\accentset{\circ}{k}_\pi(n^2,n^2)+\cO(n^{-2})$
	as $n\to\infty$. \label{eq:asymp-k}
\end{enumerate}
\end{lemma}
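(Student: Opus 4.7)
The three items interlock: (ii) requires a sharpening of (i), and (iii) uses (ii). I would therefore proceed in order. For (i), the boundary data $\gamma = \pi/2$ forces $\xi'(\pi,\lambda_n) = 0$. Applying \eqref{ecuacion integral para derivada de soluciones} at $x = \pi$ together with \eqref{estimado derivada funciones propias con potencial} gives
\[
\sqrt{\lambda_n}\,\sin(\sqrt{\lambda_n}\,\pi) = \int_0^\pi \cos(\sqrt{\lambda_n}(\pi - y))\, V(y)\,\xi(y,\lambda_n)\,dy,
\]
and the right-hand side is bounded as $\lambda_n \to \infty$ along the positive real axis, because \eqref{estimado funciones propias con potencial} keeps $\xi(y,\lambda_n)$ bounded there. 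Hence $\abs{\sin(\sqrt{\lambda_n}\,\pi)} = \cO(1/n)$, so each $\sqrt{\lambda_n}$ lies within $\cO(1/n)$ of an integer; a standard counting argument comparing with the eigenvalues $n^2$ of the $V \equiv 0$ case identifies the integer as $n$, yielding (i).

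To prove (ii) I first sharpen (i). Substituting $\xi(y,\lambda_n) = \cos(\sqrt{\lambda_n}\,y) + \cO(1/n)$ from \eqref{estimado funciones propias con potencial} into the displayed integral and using $2\cos(\sqrt{\lambda_n}(\pi-y))\cos(\sqrt{\lambda_n}\,y) = \cos(\sqrt{\lambda_n}\,\pi) + \cos(\sqrt{\lambda_n}(\pi-2y))$ decomposes the integral into an explicit term proportional to $\cos(\sqrt{\lambda_n}\,\pi)\int_0^\pi V$ and an oscillatory tail that a single integration by parts (legal because $V \in \mathrm{AC}[0,\pi]$) renders $\cO(1/n)$. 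Expanding $\sin(\sqrt{\lambda_n}\,\pi) = (-1)^n \delta_n \pi + \cO(1/n^3)$ and $\cos(\sqrt{\lambda_n}\,\pi) = (-1)^n + \cO(1/n^2)$, where $\delta_n := \sqrt{\lambda_n} - n$, and solving produces
\[
\delta_n = \frac{1}{2\pi n}\int_0^\pi V(y)\,dy + \cO(1/n^2).
\]
I would then feed \eqref{ecuacion integral para soluciones} at generic $x$ into the definition of $T(x,n)$. The same product-to-sum identity applied to $\sin(\sqrt{\lambda_n}(x-y))\cos(\sqrt{\lambda_n}\,y)$ plus one integration by parts on the oscillatory tail yields, uniformly in $x\in[0,\pi]$,
\[
\xi(x,\lambda_n) = \cos(\sqrt{\lambda_n}\,x) + \frac{\sin(\sqrt{\lambda_n}\,x)}{2\sqrt{\lambda_n}}\int_0^x V(y)\,dy + \cO(1/n^2).
\]
Taylor expanding $\cos(\sqrt{\lambda_n}\,x)$ and $\sin(\sqrt{\lambda_n}\,x)$ around $nx$ using $\delta_n = \cO(1/n)$, and collecting terms using the explicit form of $\rho$, reduces $T(x,n)$ to $-x\sin(nx)\bigl[\delta_n - \tfrac{1}{2\pi n}\int_0^\pi V\bigr] + \cO(1/n^2)$; the refined expansion of $\delta_n$ then gives the desired $\cO(1/n^2)$ bound.

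For (iii), I would expand
\[
\xi(x,\lambda_n)^2 = \cos^2(nx) + \frac{\rho(x)}{n}\sin(2nx) + 2\cos(nx)\,T(x,n) + \Bigl(\tfrac{\rho(x)}{n}\sin(nx) + T(x,n)\Bigr)^2
\]
and integrate over $[0,\pi]$. Part (ii) bounds the $\cos(nx)T(x,n)$ cross-term by $\cO(1/n^2)$, and the squared remainder is manifestly $\cO(1/n^2)$. The residual integral $\tfrac{1}{n}\int_0^\pi \rho(x)\sin(2nx)\,dx$ is handled by one integration by parts: the endpoint contributions vanish because $\rho(0) = \rho(\pi) = 0$, and $\rho'(x) = V(x)/2 - \tfrac{1}{2\pi}\int_0^\pi V$ is bounded under \ref{propiedades requeridas para el potencial}, producing $\cO(1/n)$ for the integral and thus $\cO(1/n^2)$ overall. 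The main obstacle is the algebraic bookkeeping in (ii): the function $\rho$ is precisely engineered so that the $\cO(1/n)$ discrepancy between $\cos(\sqrt{\lambda_n}\,x)$ and $\cos(nx)$ exactly cancels the $\sin(nx)\int_0^x V/(2n)$ term from the integral equation, once the sharp value of $\delta_n$ has been inserted; extracting this cancellation demands careful tracking of all $\cO(1/n)$ contributions and is the technical heart of the lemma.
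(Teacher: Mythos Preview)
Your argument is correct. For items (i) and (ii) the paper simply cites \cite[Sec.\,1.2.2]{MR1136037} (Levitan--Sargsjan), noting that the hypothesis there on $V'$ can be relaxed to $V'\in L_1(0,\pi)$; you instead reproduce that classical argument in detail, deriving the refined eigenvalue asymptotic $\delta_n = \tfrac{1}{2\pi n}\int_0^\pi V + \cO(n^{-2})$ from the boundary equation and then feeding it into the integral equation to exhibit the exact cancellation that makes $T(x,n)=\cO(n^{-2})$. Your route is more self-contained and makes transparent where the hypothesis $V\in\mathrm{AC}[0,\pi]$ actually enters (the integration by parts on the oscillatory tails), at the cost of the bookkeeping you flag. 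For item (iii) your proof is essentially identical to the paper's: expand $\xi^2(x,\lambda_n)$ using the decomposition $\xi=\cos(nx)+\tfrac{\rho}{n}\sin(nx)+T$, use (ii) to control the cross terms, and integrate $\tfrac{1}{n}\int_0^\pi\rho(x)\sin(2nx)\,dx$ by parts using $\rho(0)=\rho(\pi)=0$.
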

\begin{proof}
Items \ref{eq:asymp-eigenvalues} and \ref{eq:asymp-T} are shown in
\cite[Sec.1.2.2]{MR1136037}. We note that the asymptotic formulae in
\cite{MR1136037} are obtained assuming that $V'$ is bounded in
$[0,\pi]$. However, one can see that it suffices to require $V'\in
L_1(0,\pi)$.

We turn to the proof of \ref{eq:asymp-k}. Let us recall that
\begin{equation*}
k_\pi(\lambda_n,\lambda_n)=\inner{\xi(\cdot,\lambda_n)}{\xi(\cdot,\lambda_n)}_{L_2(0,\pi)}
	=\int_0^\pi\abs{\xi(x,\lambda_n)}^2dx=\int_0^\pi\xi^2(x,\lambda_n)dx,
\end{equation*}
while
\begin{equation*}
  \accentset{\circ}{k}_\pi(n^2,n^2)=\int_0^\pi (\cos nx)^2dx.
\end{equation*}
A straightforward computation shows that
\begin{equation*}
\sup_{0\le x\le\pi} \abs{\rho(x)} \le \norm{V}_{L_1(0,\pi)}, \quad 
\sup_{0\le x\le\pi} \abs{\rho'(x)} \le \norm{V}_{L_1(0,\pi)}.
\end{equation*}
Together with \eqref{eq:T-def}
and
\ref{eq:asymp-T}, these inequalities imply
\begin{equation}    \label{eq:square-eigenfunctions-simplified}
\xi^2(x,\lambda_n)=(\cos nx)^2+\frac{\rho(x)}{n}\sin(2nx)+\cO(n^{-2}), \qquad n\to\infty,
\end{equation}
uniformly with respect to $x\in[0,\pi]$. Using integration by parts
along with the fact that $\rho(0)=\rho(\pi)=0$, one obtains
\begin{equation}    \label{eq:bound-remaining-integral}
\abs{\int_0^\pi \rho(x)\sin(2nx)dx} \le \frac{1}{2n}\int_0^\pi \abs{\rho'(x)\cos(2nx)}dx.
\end{equation}
Assertion \ref{eq:asymp-k} follows from
\eqref{eq:square-eigenfunctions-simplified} and \eqref{eq:bound-remaining-integral}.
\end{proof}
\begin{lemma}
\label{lem:integrals-1-3}
Assume that $V$ satisfies \ref{propiedades requeridas para el
  potencial} with $b=\pi$.
Consider an arbitrary $a \in (0,\pi]$.
Then, for all $z \in \C$ and $n \in \N$, the following inequalities hold true:
\begin{gather}
\abs{\int_0^\pi F(x,z) \cos(n\,x) \,dx}
\le C_1 \frac{e^{\pi \abs{\im\sqrt{z}}}}{n^2}
	\left(1 + \frac{1 + \abs{z}}{1 + \pi\abs{z}^{1/2}}\right) ,
	\label{eq:int1}
\\[2mm]
\abs{\int_0^a \rho(x) \cos(\sqrt{z}\,x) \sin(nx) dx}
\le C_2 \frac{e^{\pi \abs{\im\sqrt{z}}}}{n}
	\left( 1 + \frac{\abs{z}}{1+\pi\abs{z}^{1/2}}\right) ,
	\label{eq:int2}
\\[2mm]
\abs{\int_0^a \rho(x) F(x,z) \sin(nx) dx}
\le C_3 \frac{e^{\pi \abs{\im\sqrt{z}}}}{n}
	\left(1 + \frac{1}{1+\abs{z}^{1/2}\pi}\right) .
	\label{eq:int3}
\end{gather}
Here, $C_1>0$ depends on $V$ while $C_2>0$ and $C_3>0$ may, in addition,
depend on $a$.
\end{lemma}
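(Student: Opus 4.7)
The plan is to establish each of the three inequalities via a tailored integration by parts that exploits the vanishing boundary data $F(0,z)=F'(0,z)=0$ (from the initial conditions on $\xi$) and $\rho(0)=0$, combined with the pointwise estimates \eqref{estimado funciones propias con potencial}--\eqref{estimado derivada funciones propias con potencial} and the elementary bounds $\|\rho\|_\infty,\|\rho'\|_\infty\le \|V\|_{L_1}$ that were already verified inside the proof of Lemma~\ref{lemma:nuevo}.

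For \eqref{eq:int2} and \eqref{eq:int3}, a single integration by parts against $\sin(nx)\,dx = -\tfrac{1}{n}\,d\cos(nx)$ suffices to extract the factor $1/n$. In \eqref{eq:int2}, using $\rho(0)=0$, the result is a boundary term proportional to $\rho(a)\cos(\sqrt{z}a)$, an integral of $\rho'(x)\cos(\sqrt{z}x)\cos(nx)$, and an integral of $\sqrt{z}\,\rho(x)\sin(\sqrt{z}x)\cos(nx)$ coming from differentiating $\cos(\sqrt{z}x)$. The first two are $O(e^{\pi|\im\sqrt{z}|}/n)$; the third carries a factor $|z|^{1/2}$ that is precisely what the stated $|z|/(1+\pi|z|^{1/2})$ prefactor encodes. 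In \eqref{eq:int3}, the analogous integration by parts yields a boundary term $\rho(a)F(a,z)\cos(na)/n$ plus integrals of $\rho'(x)F(x,z)\cos(nx)$ and $\rho(x)F'(x,z)\cos(nx)$; the first two inherit the decay $(1+\pi|z|^{1/2})^{-2}$ of $F$ from \eqref{estimado funciones propias con potencial}, while the $\rho F'$ piece is only $O(e^{\pi|\im\sqrt{z}|})$ by \eqref{estimado derivada funciones propias con potencial}, which is what the additive constant $1$ in the stated factor $1+(1+\pi|z|^{1/2})^{-1}$ is designed to absorb.

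The subtle case is \eqref{eq:int1}. I would integrate by parts twice against $\cos(nx)$; both $x=0$ boundary contributions drop thanks to $F(0,z)=F'(0,z)=0$, leaving only $(-1)^n F'(\pi,z)/n^2$ and an integral $n^{-2}\int_0^\pi F''(x,z)\cos(nx)\,dx$. Using $F''=V\xi-zF$ and writing $J_n$ for the target integral yields
\[
J_n \;=\; \frac{(-1)^n F'(\pi,z)}{n^2} \;-\; \frac{1}{n^2}\!\int_0^\pi\! V(x)\xi(x,z)\cos(nx)\,dx \;+\; \frac{z}{n^2}\,J_n.
\]
The first two summands are $O(e^{\pi|\im\sqrt{z}|}/n^2)$ by \eqref{estimado derivada funciones propias con potencial} and the standard bound $|\xi(x,z)|\le Ce^{\pi|\im\sqrt{z}|}$. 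For the third summand, I would bound $|J_n|$ on the right-hand side by the direct estimate $|J_n|\le \pi\|F(\cdot,z)\|_\infty \le Ce^{\pi|\im\sqrt{z}|}/(1+\pi|z|^{1/2})^2$ coming from \eqref{estimado funciones propias con potencial}, which produces the prefactor $|z|/(1+\pi|z|^{1/2})^2$. The elementary algebraic inequality $|z|/(1+\pi|z|^{1/2})^2 \le (1+|z|)/(1+\pi|z|^{1/2})$ (routine, via cross-multiplication) then rewrites this as the quantity appearing in the claimed bound.

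The principal obstacle is avoiding the resonances $z\approx n^2$. The temptation is to solve the linear relation above for $J_n$ algebraically, but this would produce a factor $(1-z/n^2)^{-1}$, singular precisely in the regime where the Neumann eigenvalues $\lambda_n\sim n^2$ accumulate, destroying uniformity in $z$. Substituting the direct estimate from Lemma~\ref{lemma:cota_eigen_functions} on the right-hand side instead trades this algebraic inversion for the a priori decay of $F$ in $|z|$---which is the point of the refined form of \eqref{estimado funciones propias con potencial}---and is what makes the bound remain controlled across the whole complex plane.
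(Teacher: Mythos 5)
Your proposal is correct and follows essentially the same route as the paper: integration by parts (twice for \eqref{eq:int1}, once for \eqref{eq:int2} and \eqref{eq:int3}), the identity $F''=V\xi-zF$, and the estimates \eqref{estimado funciones propias con potencial}--\eqref{estimado derivada funciones propias con potencial} together with $\norm{\rho}_\infty,\norm{\rho'}_\infty\le\norm{V}_{L_1}$. Your explicit treatment of the term $\tfrac{z}{n^2}J_n$ via the a priori sup-bound on $F$ is just a rephrasing of the paper's bound $\sup\abss{F''}\le\sup\abss{V\xi}+\abss{z}\sup\abss{F}$, so the two arguments coincide.
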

\begin{proof}
Integrating by parts one obtains,
\begin{equation*}
\abs{\int_0^\pi F(x,z) \cos(nx) dx}
\le \frac{1}{n^2} \left( 2 \sup_{x\in[0,\pi]} \abs{F'(x,z)}
	+ \pi \sup_{x\in[0,\pi]} \abs{F''(x,z)} \right) .
\end{equation*}
On one hand, due to \eqref{estimado derivada funciones propias con potencial},
\begin{equation*}
\sup_{x\in[0,\pi]} \abs{F'(x,z)} \le C_V
\exp( \abs{ \im \sqrt{z} } \pi ) .
\end{equation*}
On the other hand, since $F''(x,z)=V(x)\xi(x,z)-zF(x,z)$, it follows
from \eqref{estimado funciones propias con potencial} that
\begin{equation*}
\abs{F''(x,z)}
\le e^{\pi \abs{\im\sqrt{z}}}
	\left( \frac{C_V\,x}{1+\abs{z}^{1/2}x}\left( \norm{V}_{L_1}
	+\abs{z}\right) + \norm{V}_{L_1}\right) .
\end{equation*}
This implies \eqref{eq:int1}.

The proof of \eqref{eq:int2} repeats the argumentation above: integrate by parts and
observe that
\begin{gather*}
\sup_{x\in[0,a]} \abs{\rho(x) \cos(\sqrt{z}\,x)}
	 \le \norm{V}_{L_1} e^{\pi \abs{\im\sqrt{z}}} ,
\\[1mm]
\sup_{x\in[0,a]} \abs{\frac{d}{dx}\rho(x) \cos(\sqrt{z}\,x)}
	\le \norm{V}_{L_1} e^{\pi \abs{\im\sqrt{z}}}
		\left( \frac{\abs{z}\,C\,\pi}{1+\abs{z}^{1/2}\pi} + 1 \right) .
\end{gather*}

The proof of \eqref{eq:int3} follows a similar reasoning.
\end{proof}

\begin{lemma}
\label{valor importante del producto interior}
Set $a \in (0,\pi)$ and consider $\mathcal{R}_{a\pi}$ given by
\eqref{def de funcion R}. Then, for any $n \in \N\cup\{0\}$ and
$z \in \C \setminus \{ n^2 \}$,
\begin{align*}
\inner{\cos(\sqrt{\cc{z}}\,\cdot)}{\cR_{a\pi}(\cdot)\cos(n\,\cdot)}
&= \frac{1}{2 (\pi-a)}
	\Bigg(\frac{\cos\big( (\sqrt{z}+n)a \big)-(-1)^n \cos(\sqrt{z} \, \pi)}
			   {(\sqrt{z}+n)^2}\\[1mm]
&+ \frac{\cos \big( (\sqrt{z}-n)a \big)-(-1)^n \cos(\sqrt{z} \, \pi)}
	   {(\sqrt{z}-n)^2} \Bigg) .
\end{align*}
\end{lemma}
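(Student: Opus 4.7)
The proof plan is a direct computation. First, I would unpack the inner product using the paper's convention $\inner{f}{g}_{L_2} = \int \bar f g$. Since the main branch satisfies $\overline{\sqrt{\cc z}} = \sqrt z$, we have $\overline{\cos(\sqrt{\cc z}\,x)} = \cos(\sqrt z\,x)$, so
\[
\inners{\cos(\sqrt{\cc z}\,\cdot)}{\cR_{a\pi}(\cdot)\cos(n\,\cdot)}
 = \int_0^a \cos(\sqrt z\,x)\cos(nx)\,dx
 + \frac{1}{\pi-a}\int_a^\pi (\pi-x)\cos(\sqrt z\,x)\cos(nx)\,dx.
\]
Then I would apply the product-to-sum formula $2\cos A\cos B = \cos(A-B)+\cos(A+B)$ to reduce the problem to evaluating, for each of the two values $\gamma = \sqrt z + n$ and $\gamma = \sqrt z - n$, the quantity
\[
I_\gamma := \frac{1}{2}\int_0^a \cos(\gamma x)\,dx + \frac{1}{2(\pi-a)}\int_a^\pi (\pi-x)\cos(\gamma x)\,dx.
\]

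Next I would compute $I_\gamma$ explicitly. The first integral gives $\sin(\gamma a)/(2\gamma)$. For the second, integrate by parts with $u=\pi-x$, $dv = \cos(\gamma x)\,dx$, obtaining
\[
\int_a^\pi (\pi-x)\cos(\gamma x)\,dx
 = -\frac{(\pi-a)\sin(\gamma a)}{\gamma}
 + \frac{\cos(\gamma a)-\cos(\gamma\pi)}{\gamma^2}.
\]
Dividing by $2(\pi-a)$ and adding to the first integral, the two $\sin(\gamma a)/\gamma$ contributions cancel exactly, leaving
\[
I_\gamma = \frac{\cos(\gamma a)-\cos(\gamma\pi)}{2(\pi-a)\gamma^2}.
\]

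Finally I would use the elementary identity $\cos((\sqrt z \pm n)\pi) = \cos(\sqrt z\,\pi)\cos(n\pi) \mp \sin(\sqrt z\,\pi)\sin(n\pi) = (-1)^n\cos(\sqrt z\,\pi)$, valid because $n\in\N\cup\{0\}$. Summing $I_{\sqrt z + n}$ and $I_{\sqrt z - n}$ then yields the asserted formula.

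There is no real obstacle here; the only ingredient that might escape notice is the cancellation of the $\sin(\gamma a)/\gamma$ terms in step two, which is crucial because it is precisely this cancellation that makes the expression decay like $1/\gamma^2$ (and hence produces the summable bound needed in the proof of Proposition~\ref{hipotesis sobremuestreo sin potencial}). Retaining those sine terms would leave a $O(1/\gamma)$ remainder that would spoil the uniform convergence argument, so I would make sure to highlight that the design of the cutoff $\cR_{a\pi}$ (linear decay on $(a,\pi]$) is exactly what forces this cancellation.
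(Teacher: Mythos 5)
Your proof is correct and follows essentially the same route as the paper's: the product-to-sum identity followed by integration by parts over $(a,\pi)$, with the observation that $\cos\big((\sqrt z\pm n)\pi\big)=(-1)^n\cos(\sqrt z\,\pi)$. The only cosmetic difference is that you integrate $(\pi-x)\cos(\gamma x)$ by parts directly (taking $u=\pi-x$) instead of splitting off $\int_a^\pi x\cos(\gamma x)\,dx$ as the paper does, which makes the cancellation of the $\sin(\gamma a)/\gamma$ terms --- the point you rightly emphasize --- slightly more transparent.
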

\begin{proof}
On one hand, the identity
\[
\cos(\sqrt{z} x) \cos(n x)
= 2^{-1}\big(\cos( (\sqrt{z}+n) x ) + \cos((\sqrt{z}-n)x)\big)
\]
leads to
\begin{equation*}
\int_0^{a} \cos(\sqrt{z} x) \cos(n x) \, dx
	= \frac{1}{2} \left( \frac{ \sin \big( (\sqrt{z}+n)a \big)}{\sqrt{z}+n}
		+ \frac{ \sin \big( (\sqrt{z}-n)a \big)}{\sqrt{z}-n} \right) .
\end{equation*}
On the other hand,
\begin{multline*}
\int_{a}^\pi \cos(\sqrt{z} x) \cos(n x)
\left( \frac{\pi-x}{\pi-a} \right) dx
\\[2mm]
= - \frac{1}{2(\pi-a)} \left(\int_{a}^\pi x \cos \big( (\sqrt{z}+n) x \big) dx
	+ \int_{a}^\pi x \cos \big( (\sqrt{z}-n) x \big) dx \right)
\\[2mm]
+\frac{\pi}{2(\pi-a)} \left( \frac{\sin \big( (\sqrt{z}+n) x \big)}{\sqrt{z}+n}
	+ \frac{\sin\big((\sqrt{z}-n) x \big)}{\sqrt{z}-n} \right)\Bigg\rvert_{x=a}^{x=\pi} .
\end{multline*}
Another integration by parts yields
\begin{multline*}
\int_{a}^\pi x \cos ( (\sqrt{z} \pm n)  x ) \, dx
= \frac{(-1)^n \cos(\sqrt{z} \pi)
	- \cos \big( (\sqrt{z} \pm n) a \big)}{(\sqrt{z} \pm n)^2}
\\[2mm]
+ \frac{(-1)^n \sin(\sqrt{z} \pi)
	- a \sin \big( (\sqrt{z} \pm n) a \big)}{\sqrt{z} \pm n}  .
\end{multline*}
This completes the proof.
\end{proof}

\begin{lemma}
\label{int 1 oversam}
Set $a \in (0,\pi)$ and consider $\mathcal{R}_{a\pi}$ given by
\eqref{def de funcion R}. Then, for every $z\in\C$ and $n\in\N$,
\begin{equation*}
\abs{\int_0^\pi F(x,z) \cR_{a\pi}(x) \cos(nx) dx}
	\le C\,\frac{e^{\pi \abs{\im\sqrt{z}}}}{n^2}
		   \left(1 + \frac{1 + \abs{z}}{1 + \pi\abs{z}^{1/2}} \right) ,
\end{equation*}
where $C > 0$ may depend on $V$.
\end{lemma}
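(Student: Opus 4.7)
The plan is to mirror the proof of \eqref{eq:int1} in Lemma~\ref{lem:integrals-1-3}: integrate by parts twice with respect to $\cos(nx)$ to gain the factor $1/n^2$, and use the estimates on $F$, $F'$, and $F''$ already derived in the proof of that lemma. The new wrinkle is that $\cR_{a\pi}$ is only piecewise smooth, with a corner at $x=a$, so first I would split the integral as
\[
\int_0^\pi F(x,z)\cR_{a\pi}(x)\cos(nx)\,dx = \int_0^a F(x,z)\cos(nx)\,dx + \int_a^\pi F(x,z)\,\frac{\pi-x}{\pi-a}\,\cos(nx)\,dx,
\]
and treat each piece separately.

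On $[0,a]$, integration by parts twice produces boundary contributions at $0$ and $a$. At $x=0$ both boundary terms vanish because the initial conditions $\xi(0,z)=1$ and $\xi'(0,z)=0$ imply $F(0,z)=0$ and $F'(0,z)=0$. On $[a,\pi]$, writing $g(x)\defeq F(x,z)(\pi-x)/(\pi-a)$ and integrating by parts twice, I would note that $g(\pi)=0$ while the first-order boundary term at $x=a$ is $-F(a,z)\sin(na)/n$, which exactly cancels the corresponding term from the $[0,a]$ piece (as expected, since $F\cR_{a\pi}$ is continuous at $a$). After the second integration by parts, the derivative term $F'(a,z)\cos(na)/n^2$ coming from $[0,a]$ cancels with its counterpart from $g'(a)=F'(a,z)-F(a,z)/(\pi-a)$ on $[a,\pi]$. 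What remains is
\[
\int_0^\pi F\cR_{a\pi}\cos(nx)\,dx
= \frac{1}{n^2}\Bigl[\tfrac{F(a,z)-(-1)^n F(\pi,z)}{\pi-a}\cos(na)\Bigr]
- \frac{1}{n^2}\!\int_0^a F''(x,z)\cos(nx)\,dx - \frac{1}{n^2}\!\int_a^\pi g''(x)\cos(nx)\,dx,
\]
where $g''(x)=F''(x,z)(\pi-x)/(\pi-a)-2F'(x,z)/(\pi-a)$.

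Then I would bound every surviving term using the estimates from the proof of \eqref{eq:int1}: $\lvert F(x,z)\rvert$ via \eqref{estimado funciones propias con potencial}, which contributes the factor $e^{\pi|\im\sqrt{z}|}/(1+\pi|z|^{1/2})$; $\lvert F'(x,z)\rvert \le C_V e^{\pi|\im\sqrt{z}|}$ via \eqref{estimado derivada funciones propias con potencial}; and $\lvert F''(x,z)\rvert$ via the identity $F''=V\xi-zF$ together with \eqref{estimado funciones propias con potencial}, which provides the factor $1+(1+|z|)/(1+\pi|z|^{1/2})$. Summing these contributions with the explicit $1/n^2$ yields the claimed bound, with a constant $C$ depending on $V$ (through $\|V\|_{L_1}$ and the constant in \eqref{estimado funciones propias con potencial}) and on $a$ through the factor $1/(\pi-a)$; since $a$ is fixed in the statement, this is absorbed into $C$.

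The main obstacle I anticipate is bookkeeping rather than analysis: keeping track of which boundary contributions from the two integration-by-parts pieces cancel across the corner at $x=a$, so that only the harmless terms evaluated at $0$ and $\pi$ survive and need to be estimated. Once that cancellation is verified, the bounds on $F$, $F'$, $F''$ already in hand from Lemma~\ref{lemma:cota_eigen_functions} and the argument of Lemma~\ref{lem:integrals-1-3} finish the proof.
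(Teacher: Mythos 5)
Your proposal is correct and follows essentially the same route as the paper: split the integral at $x=a$, integrate by parts against $\cos(nx)$ to gain the factor $n^{-2}$ (the paper performs the first integration explicitly and defers the second to ``an argument similar to the proof of Lemma~\ref{lem:integrals-1-3}''), observe that the boundary contributions at the corner $x=a$ cancel, and bound the surviving terms via the estimates on $F$, $F'$, $F''$ from Lemma~\ref{lemma:cota_eigen_functions}. The only slip is cosmetic: in your displayed formula the factor $\cos(na)$ should multiply only $F(a,z)$ and not $(-1)^nF(\pi,z)$, which does not affect the resulting bound.
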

\begin{proof}
Integration by parts yields
\begin{gather}
\label{int F(x,z) cos(nx) zero beta}
\int_0^a F(x,z) \cos(nx)\,dx
	= \frac{1}{n}\Big( F(a,z)\sin(na) - \int_0^a F'(x,z)\sin(nx)\, dx \Big) ,
\\[2mm]
\label{int F(x,z) cos(nx) beta pi}
\int_a^\pi\!\! F(x,z) \cos(nx)\, dx
	= - \frac{1}{n}\Big( F(a,z)\sin(na)
		+\!\int_a^\pi\!\! F'(x,z)\sin(nx)\, dx \Big) ,
\end{gather}
and
\begin{multline}
\int_a^\pi x F(x,z) \cos(nx) dx
	= - \frac{1}{n}\Big( a F(a,z)\sin(na)
\\[2mm]
	 + \int_a^\pi x F'(x,z) \sin(nx)\,dx
     + \int_a^\pi F(x,z) \sin(nx)\,dx\Big)
\label{int x F(x,z) cos(nx)}  .
\end{multline}
Now, \eqref{int F(x,z) cos(nx) beta pi} and \eqref{int x F(x,z) cos(nx)} imply
\begin{multline}
\label{integral larga}
\int_a^\pi F(x,z) \left(\frac{\pi-x}{\pi-a}\right)\cos(nx) \,dx
\\[2mm]
	= -\,\frac{\pi}{(\pi-a)n}\Big(F(a,z) \sin(na)
      +  \int_a^\pi F'(x,z)\sin(nx) dx\Big)\qquad
\\[2mm]
      + \frac{1}{(\pi-a)n}\Big(a F(a,z) \sin(na)
      + \int_a^\pi xF'(x,z)\sin(nx) dx
\\[2mm]
      + \int_a^\pi F(x,z)\sin(nx) dx\Big) .
\end{multline}
Then, \eqref{int F(x,z) cos(nx) zero beta} and \eqref{integral larga} yield
\begin{multline*}
\int_0^\pi F(x,z) \cR(x) \cos(nx)\, dx
\\[2mm]
	= - \frac{1}{n}\int_0^a F'(x,z)\sin(nx)\, dx
      - \frac{\pi}{(\pi-a)n}\int_a^\pi F'(x,z)\sin(nx)\, dx
\\[2mm]
      + \frac{1}{(\pi-a)n}\Big(\int_a^\pi xF'(x,z)\sin(nx)\, dx
      + \int_a^\pi F(x,z)\sin(nx) dx\Big) .
\end{multline*}
The claimed assertion now follows by an argument similar to the proof of
Lemma~\ref{lem:integrals-1-3}.
\end{proof}

\begin{lemma}
\label{int 2 oversam}
Let $V$ as in \ref{propiedades requeridas para el potencial} with $b=\pi$. Fix $a \in (0,\pi)$.
Then,
\[
\abs{\int_a^\pi \cos(\sqrt{z}\,x) \left(\frac{\pi-x}{\pi-a}\right)
\rho(x) \sin(nx)\, dx}
	\le \frac{C}{n} e^{\pi \abs{\im\sqrt{z}}}
		\left(1
		+ \frac{\abs{z}}{1 + \pi\abs{z}^{1/2}} \right) ,
            \]
            and

\[
\abs{\int_a^\pi F(x,z) \left(\frac{\pi-x}{\pi-a}\right) \rho(x)\sin(nx)\, dx}
\le \frac{C}{n}\,e^{\pi \abs{\im\sqrt{z}}}
	\left(1 + \frac{1}{1+\pi\abs{z}^{1/2}}\right) ,
\]
for arbitrary $z \in \C$ and $n \in \N$.
\end{lemma}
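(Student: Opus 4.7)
The plan is to treat both bounds by one integration by parts in $x$, with $\sin(nx)$ as the factor to integrate (producing $-\cos(nx)/n$). This immediately supplies the required $1/n$, after which everything reduces to supremum bounds on $\rho$, $\rho'$, $\cos(\sqrt{z}\,\cdot)$, $F$, and $F'$. The weight $R(x) = (\pi-x)/(\pi-a)$ vanishes at $x=\pi$, so the only surviving boundary contribution comes from $x=a$.

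For the first integral, write $u(x) \defeq \cos(\sqrt{z}x)\,R(x)\,\rho(x)$, apply integration by parts, and use $u(\pi) = 0$ to obtain
\[
\int_a^\pi u(x)\sin(nx)\,dx = \frac{\cos(\sqrt{z}a)\rho(a)\cos(na)}{n} + \frac{1}{n}\int_a^\pi u'(x)\cos(nx)\,dx.
\]
The boundary term is controlled by $(\|V\|_{L_1}/n)\,e^{\pi|\im\sqrt{z}|}$. Expanding $u'$ gives three contributions: $-\sqrt{z}\sin(\sqrt{z}x)R(x)\rho(x)$, $-\cos(\sqrt{z}x)(\pi-a)^{-1}\rho(x)$, and $\cos(\sqrt{z}x)R(x)\rho'(x)$. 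Using $\sup|\rho|,\sup|\rho'|\le\|V\|_{L_1}$ and $|\cos(\sqrt{z}x)|\le e^{\pi|\im\sqrt{z}|}$, the last two contribute bounded factors (the ``$1$'' in the statement). The first contributes a factor $|\sqrt{z}|\,e^{\pi|\im\sqrt{z}|}$, and the bookkeeping identity
\[
|\sqrt{z}| \le C\left(1 + \frac{|z|}{1+\pi|z|^{1/2}}\right),
\]
verified by splitting into $|z|\le 1$ and $|z|>1$, converts this to the form appearing in the statement. Combining yields the first inequality.

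For the second integral, set $v(x)\defeq F(x,z)\,R(x)\,\rho(x)$ and integrate by parts in the same way; again $v(\pi)=0$, so only the $x=a$ boundary term survives. Expanding $v'$ produces analogous three pieces, now with $F$ and $F'$ in place of $\cos(\sqrt{z}x)$ and $-\sqrt{z}\sin(\sqrt{z}x)$. Lemma~\ref{lemma:cota_eigen_functions} gives
\[
|F(x,z)| \le \frac{C\,e^{\pi|\im\sqrt{z}|}}{1+\pi|z|^{1/2}},\qquad |F'(x,z)| \le C\,e^{\pi|\im\sqrt{z}|},
\]
uniformly in $x\in[0,\pi]$ (with constants depending on $V$). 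The boundary term and the two pieces in which $F$ (not $F'$) appears then produce the factor $1/(1+\pi|z|^{1/2})$, whereas the piece containing $F'$ contributes a purely bounded factor (the ``$1$''). Summing yields the second inequality.

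The main obstacle is purely organizational: tracking the three pieces of each derivative and identifying which ones give the unbounded-in-$z$ contribution ($|\sqrt{z}|$ in the first bound, $F'$ in the second) and which remain uniformly bounded. No new analytic input is needed beyond the bounds in Lemma~\ref{lemma:cota_eigen_functions} and the uniform control of $\rho,\rho'$ by $\|V\|_{L_1}$; the only small trick is the elementary inequality for $|\sqrt{z}|$ used in the first case.
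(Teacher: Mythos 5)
Your proof is correct and follows essentially the same route as the paper: one integration by parts moving the oscillation onto $\sin(nx)$, then supremum bounds on the integrand and its derivative using $\sup\abss{\rho},\sup\abss{\rho'}$ and the estimates of Lemma~\ref{lemma:cota_eigen_functions}, which is exactly how the paper reduces the claim to bounding its quantities $M_1(z)$ and $M_2(z)$. The only cosmetic difference is that you obtain the factor $\abss{z}/(1+\pi\abss{z}^{1/2})$ via the elementary inequality for $\abss{\sqrt{z}}$ rather than through the bound on $\sin(\sqrt{z}x)/\sqrt{z}$, and you note explicitly that the boundary term at $x=\pi$ vanishes; neither changes the argument.
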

\begin{proof}
We prove the first inequality. The second one is proved analogously.
Arguing as in the beginning of the proof of Lemma~\ref{lem:integrals-1-3}, one obtains
\begin{equation*}
\abs{\int_a^\pi \cos(\sqrt{z}\,x) \left(\frac{\pi-x}{\pi-a}\right) \rho(x) \sin(nx)\,dx}
\le
\frac{1}{n}\left(2 M_1(z) + \pi M_2(z)\right) ,
\end{equation*}
where
\begin{equation*}
M_1(z) \defeq \sup\left\{ \abs{\cos(\sqrt{z}\,x)
			\frac{\pi-x}{\pi-a} \rho(x)} : x\in [a,\pi]\right\} ,
\end{equation*}
and
\begin{equation*}
M_2(z) \defeq \sup\left\{\abs{\frac{d}{dx}
			\Big( \cos(\sqrt{z}\,x)
			\frac{\pi-x}{\pi-a} \rho(x) \Big)} : x\in[a,\pi]\right\} .\qedhere
\end{equation*}
\end{proof}

\end{document}